\newif\ifpublic
\numberwithin{equation}{section}
\newcommand{\footremember}[2]{%
    \footnote{#2}
    \newcounter{#1}
    \setcounter{#1}{\value{footnote}}%
}
\newtheorem{theorem}{Theorem}
\numberwithin{theorem}{section}
\newtheorem{lemma}[theorem]{Lemma}
\newtheorem{corollary}[theorem]{Corollary}
\newtheorem{proposition}[theorem]{Proposition}
\theoremstyle{definition}
\newtheorem{definition}[theorem]{Definition}
\theoremstyle{theorem}
\newtheorem*{remark}{Remark}
\newtheorem{observation}[theorem]{Observation}
\newtheorem{claim}[theorem]{Claim}
\newcommand{\R}{\mathbb{R}}
\newcommand{\eps}{\epsilon}
\newcommand{\suchthat}{\;:\;}
\newcommand{\indic}{\mathbb{I}}
\newcommand{\exx}{\mathcal{X}}
\renewcommand{\tilde}[1]{\widetilde{#1}}
\newcommand{\inparen}[1]{\left( #1 \right)}
\newcommand{\inbrace}[1]{\left\{ #1 \right\}}
\newcommand{\eqdef}{\triangleq}
\newcommand*{\coloneq}{\mathrel{\vcenter{\baselineskip0.5ex \lineskiplimit0pt
                     \hbox{\scriptsize.}\hbox{\scriptsize.}}}%
                     =}
\newcommand{\abs}[1]{\left|#1\right|}
\newcommand{\norm}[1]{\left\|#1\right\|}
\newcommand{\fnorm}[1]{\left\|#1\right\|_{F}}
\newcommand{\sr}[1]{\operatorname{sr}\left(#1\right)}
\newcommand{\rk}[1]{\operatorname{rank}\left (#1\right)}
\newcommand{\ssr}{\operatorname{ssr}}
\newcommand{\diam}{\operatorname{diam}}
\newcommand{\Frechet}{Fr\'{e}chet\xspace}
\title{Approximating Sparsest Cut in Low Rank Graphs via Embeddings from Approximately Low Dimensional Spaces.}
\author{Yuval Rabani\footremember{huji}{The Hebrew University of Jerusalem, Israel. e-mail: \texttt{yrabani@cs.huji.ac.il}.}\hspace{2cm}
Rakesh Venkat\footremember{huji2}{The Hebrew University of Jerusalem, Israel. e-mail: \texttt{rakesh@cs.huji.ac.il}. Supported by an I-Core Algorithms Fellowship.}}
\date{}
\begin{document}


\maketitle

\begin{abstract}
We consider the problem of embedding a finite set of points  $\inbrace{x_1, \ldots, x_n} \in \R^d$ that satisfy $\ell_2^2$ triangle inequalities into $\ell_1$, when the points are \emph{approximately} low-dimensional. Goemans (unpublished, appears in~\cite{MagenM2008}) showed that  such points residing in \emph{exactly} $d$ dimensions can be embedded into $\ell_1$ with distortion at most $\sqrt{d}$. We prove the following robust analogue of this statement: if  there exists a $r$-dimensional subspace $\Pi$ such that the projections onto this subspace satisfy $\sum_{i,j \in [n]}\norm{\Pi x_i - \Pi x_j}_2^2 \geq \Omega(1) \sum_{i,j \in [n]}\norm{x_i - x_j}_2^2$, then there is an embedding of the points into $\ell_1$ with $O(\sqrt{r})$ average distortion. A consequence of this result is that the integrality gap of the well-known Goemans-Linial SDP relaxation for the Uniform Sparsest Cut problem is $O(\sqrt{r})$ on graphs $G$ whose $r$-th smallest normalized eigenvalue of the Laplacian  satisfies $\lambda_r(G)/n \geq \Omega(1)\Phi_{SDP} (G)$. Our result improves upon the previously known bound of $O(r)$ on the average distortion, and the integrality gap of the Goemans-Linial SDP under the same preconditions, proven in \cite{DeshpandeV2014, DHV16}. 
\end{abstract}

\thispagestyle{empty}
\newpage
\setcounter{page}{1}

\section{Introduction}\label{sec:introduction}
A finite metric space consists of a pair $(\exx, d)$, where $\exx$ is a finite set of points, and $d:\exx \times \exx \rightarrow \R_{\geq0}$ is a distance function on pairs of points in $\exx$. Many combinatorial optimization problems can be naturally formulated as a maximization or minimization problem over metric spaces $(\exx, d)$ of some target class. However, since it might be computationally difficult to optimize over this class, one considers a \emph{relaxation} that finds a solution  $(\mathcal{Y}, d')$ amongst a class of computationally `easy' metrics, and then looks to produce an \emph{embedding} $\mathcal{Y} \hookrightarrow \exx $ into the target space, while minimizing some measure of \emph{distortion} between the distance functions $d$ and $d'$ incurred by the embedding. There has been much work that investigates various measures and costs   of distortion incurred by embeddings between metric spaces, and applications thereof (see the surveys \cite{IndykM2004,Matousek2002,Linial2002} and references therein).

\smallskip
In this work, we look at embeddings from $\ell_2^2$ metrics to $\ell_1$ metrics, motivated by applications to the Sparsest Cut problem. A $\ell_1$ metric (or a $\ell_1$ space) consists of  a finite set of points represented in $\R^d$ with the distance given by the $\ell_1$ distance between them. It is a natural target space that can be viewed as an non-negative combination of `cut-metrics' on the underlying point set, and hence arises frequently in graph-cut based problems. A $\ell_2^2$ space, on the other hand, is easy to optimize over, and consists of a finite set of points, say  $X = \inbrace{x_1, \ldots, x_n} \subset \R^d$, that satisfy triangle inequalities on the \emph{squares} of distances:
\begin{equation} \label{eq:l2-squared}
\norm{x_{i}-x_{j}}_2^2 +\norm{x_{j}-x_{k}}_2^2\geq \norm{x_{i}-x_{k}}_2^{2} \qquad \forall~ i,j,k \in [n].
\end{equation}

\smallskip
The Sparsest Cut problem is a fundamental NP-hard graph optimization problem that serves as a striking example of the utility of the metric embedding approach. In the (Uniform) Sparsest Cut problem, we are given a graph $G=(V,c)$, with a symmetric weight function $c_{ij}$ on pairs $\{i,j\}$. The goal is to find a cut $(S, \overline{S})$ of minimum \emph{sparsity} $\Phi(S)$, defined as follows (here, $\indic_{S}(i)$ is $1$, if $i \in S$, and $0$ otherwise). 
$$ \Phi(S) \coloneq \frac{\sum_{i<j} c_{ij} \abs{\indic_{S}(i) -
\indic_{S}(j)}}{\sum_{i<j} \abs{\indic_{S}(i) - \indic_{S}(j)}} $$

The best known approximation for the Sparsest Cut problem is due to Arora, Rao and Vazirani~\cite{AroraRV2009} (henceforth called the ARV algorithm), who considered the following semidefinite programming relaxation (SDP) introduced by Goemans and Linial (see \cite{Goemans1997} and \cite{Linial2002}).

\begin{align*}
\textbf{SDP-1:}\quad \Phi_{SDP}(G) & \coloneq \min_{\inbrace{x_i}_{i\in [n]}} ~ \frac{1}{n^2}\sum_{ij} c_{ij} \norm{x_{i}-x_{j}}_2^2 \\
&\text{s.t} \quad \begin{cases}\norm{x_{i}-x_{j}}_2^2 +\norm{x_{j}-x_{k}}_2^2\geq \norm{x_{i}-x_{k}}_2^{2} & \forall i,j,k\in[n].\\
{\sum_{kl} \norm{x_{k}-x_{l}}_2^2} = n^2. &
\end{cases}
\end{align*}

Clearly, $\Phi_{SDP}(G) \leq \Phi(G)$. Notice that any feasible solution to the above SDP constitutes a $ \ell_2 ^2 $ space. The ARV algorithm works by producing an embedding of the solutions of the above SDP into a $\ell_1$ space, with \emph{average distortion} (see Section~\ref{sec:notation} for a definition) $O(\sqrt{\log n})$. It was shown  in \cite{LinialLR1995, AumannR1998} that producing an embedding of the SDP solutions into a $\ell_1$ space with average distortion $D$ suffices to get a $O(D)$ approximation to the Uniform Sparsest Cut problem.

\smallskip
Though the solutions to SDP-1 can lie in up to $n$ dimensions, for certain graph classes, they are more structured. In particular, if the $r$-th smallest eigenvalue of the graph Laplacian satisfies $\lambda_r(G)/n \gg \Phi_{SDP}(G)$, then it turns out that the solutions are \emph{approximately} $r$-dimensional (see Definition~\ref{def:subspace-rank} and Section~\ref{sec:sparsestcut}). Graphs whose $r$-th smallest eigenvalue is bounded away from $0$ for a typically small $r$ are called \emph{low threshold-rank} graphs; note that spectral expanders are a special case of these for $r=2$. The work of Guruswami and Sinop~\cite{GuruswamiS2013} exploited higher levels of the Lasserre SDP hierarchy~\cite{Lasserre2001}, along with the above structure, to give constant-factor guarantees for Sparsest Cut on these graphs. However, this involved partially solving a SDP of size $n^{O(r)}$\footnote{In a separate work, Guruswami and Sinop~\cite{GuruswamiS2012b} give an algorithm that solves the SDP partially, running in $2^{O(r)} \mathrm{poly}(n)$ time, and suffices for their algorithm.}, and did not say anything about the behaviour of the Goemans-Linial SDP on these graphs.

\smallskip
Goemans showed that if the points satisfying $\ell_{2}^{2}$ triangle inequalities lie in $d$ dimensions, then they can be embedded into $\ell_2$ (and hence into $\ell_1$, since there is an isometry from $\ell_2$ to $\ell_1$ \cite{Matousek2002}) with $\sqrt{d}$ distortion  (unpublished, appears in~\cite{MagenM2008}, see also \cite[Section 4]{DHV16} for an alternative proof).
\begin{theorem}[{Goemans~\cite[Appendix~B]{MagenM2008}}] \label{thm:goemans-intro}
Let $x_{1}, x_{2}, \ldots, x_{n} \in \R^{d}$ be $n$ points satisfying $\ell_{2}^{2}$
triangle inequalities. Then there exists an embedding of these points into $\ell_2$,
$x_{i} \mapsto f(x_{i})$, with distortion $\sqrt{d}$, that is,
$$
\frac{1}{\sqrt{d}}~ \norm{x_{i} - x_{j}}_{2}^{2} \leq \norm{f(x_{i}) -
f(x_{j})}_{2} \leq \norm{x_{i} - x_{j}}_{2}^{2}, \quad \forall~ i, j\in
V.
$$
\end{theorem}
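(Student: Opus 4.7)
The goal is to embed the metric $D(i,j) = \|x_i - x_j\|_2^2$ on points in $\R^d$ into $\ell_2$ with distortion at most $\sqrt{d}$. My plan is to exploit the ambient $d$-dimensional structure via a feature map built from halfspace-cut features, since such maps interact naturally with both Euclidean distances and the cut-cone structure relevant to $\ell_1$.

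First, I would consider a random direction $g \in \R^d$ and a threshold $t \in \R$, together with the associated halfspace indicator $\indic[\langle g, x\rangle \geq t]$. For a suitable measure $\mu$ on $(g, t)$, the map $x \mapsto \indic[\langle g, x\rangle \geq t] \in L^2(\mu)$ has squared $L^2$-distance between $x_i$ and $x_j$ equal to the $\mu$-measure of the set of $(g, t)$ that separates $x_i$ and $x_j$. For $g$ uniform on $S^{d-1}$ and $t$ uniform on a sufficiently large interval, this evaluates by Fubini to a constant multiple of $\mathbb{E}_g [|\langle g, x_i - x_j\rangle|]$, which is proportional to $\|x_i - x_j\|_2 / \sqrt{d}$. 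This gives an embedding of the Euclidean metric with the correct $1/\sqrt{d}$ scale, but not yet of the squared Euclidean metric.

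Second, to produce a target of $\|x_i - x_j\|_2^2$ rather than $\|x_i - x_j\|_2$, I would reweight $\mu$ by a factor encoding distance along the $t$-axis, such as $w(g, t) = |t|^{1/2}$. The resulting squared contribution from direction $g$ becomes $\int_{\min}^{\max} |t|\, dt$, where $\min, \max$ denote the smaller and larger of $\langle g, x_i\rangle$ and $\langle g, x_j\rangle$; this integral can be computed explicitly and related to $|\langle g, x_i - x_j\rangle| \cdot |\langle g, x_i + x_j\rangle|$. Its expectation over a random $g$ is comparable to $\|x_i - x_j\|_2 \cdot \|x_i + x_j\|_2$. After appropriate normalization and combination with a linear feature, I expect this to yield an embedding whose $\ell_2$-distance approximates $\|x_i - x_j\|_2^2$ up to the desired factor of $\sqrt{d}$.

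The main obstacle is the \emph{deterministic} upper bound $\|f(x_i) - f(x_j)\|_2 \leq \|x_i - x_j\|_2^2$ for every pair, which does not follow from the expectation calculation alone. This is where the $\ell_2^2$ triangle inequality is essential: it forces every triple of points to form an acute (non-obtuse) configuration, which places structural constraints on how spread out the $x_i$ can be in $\R^d$. In particular, after translating the origin appropriately, it controls the ratio between $\|x_i + x_j\|_2$ and $\|x_i - x_j\|_2$, preventing the weighted halfspace-cut feature from overshooting $\|x_i - x_j\|_2^2$ on any pair. Identifying the precise structural consequence of the $\ell_2^2$ inequality in $\R^d$ that makes this uniform upper bound hold, and matching it with the correct choice of weight $w$, is the technical crux of the proof.
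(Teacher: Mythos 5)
First, note that the paper does not actually prove Theorem~\ref{thm:goemans-intro}: it is quoted from Magen--Moharrami (Appendix~B) with a pointer to \cite{DHV16} for an alternative proof, so there is no in-paper argument to compare against line by line; I am therefore assessing your proposal on its own terms. Your first step is correct but standard: the halfspace features $\indic[\langle g,x\rangle\geq t]$ with $g$ uniform on the sphere and $t$ uniform reproduce the classical fact that the \emph{Euclidean} metric embeds into $L_1$ (equivalently, its square root into $L_2$), since the measure of separating halfspaces is $\propto \mathbb{E}_g|\langle g,x_i-x_j\rangle|\propto\norm{x_i-x_j}_2/\sqrt{d}$. Already here there is a scaling problem relative to the stated theorem: for indicator-valued features the quantity you control is the \emph{squared} $L_2$ distance (which coincides with the $L_1$ distance), so to obtain $\norm{f(x_i)-f(x_j)}_2\asymp\norm{x_i-x_j}_2^2$ you would need the weighted separating measure to scale like $\norm{x_i-x_j}_2^4$, not $\norm{x_i-x_j}_2^2$. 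As written, your construction at best targets an $\ell_1$ embedding of the squared metric, which is weaker than the theorem's $\ell_2$ conclusion ($\ell_2$ embeds isometrically into $\ell_1$, not conversely).

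The more serious gap is in the second step. The reweighting by $w(g,t)=|t|^{1/2}$ produces, per pair, a quantity comparable to $\norm{x_i-x_j}_2\cdot\norm{x_i+x_j}_2$ (up to normalization). This is origin-dependent and is \emph{not} reducible to $\norm{x_i-x_j}_2^2$ by any single translation: a pair $(i,j)$ would need the origin within $O(\norm{x_i-x_j}_2)$ of its midpoint, and a generic $\ell_2^2$ point set has pairs at many distance scales with far-separated midpoints. Concretely, for the Boolean cube $\{0,1\}^2$ (an $\ell_2^2$ metric), placing the origin at the center collapses both diagonal pairs to embedded distance $0$ while $\norm{x_i-x_j}_2^2=2$, destroying the lower bound; any other origin inflates $\norm{x_i+x_j}_2$ for distant pairs and destroys the upper bound. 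The $\ell_2^2$ hypothesis cannot repair this, and indeed the place where that hypothesis must enter --- the uniform pointwise upper bound, which you correctly identify as the entire content of the theorem --- is explicitly left open in your write-up (``identifying the precise structural consequence \ldots is the technical crux''). So what you have is the classical snowflake embedding plus a reweighting heuristic that provably fails on simple examples, with the decisive step missing. The known proofs are instead direct and deterministic, working coordinate-by-coordinate with the $d$-dimensional representation and using the $\ell_2^2$ constraints pointwise rather than in expectation; I would recommend consulting \cite{MagenM2008} or \cite[Section~4]{DHV16} rather than pursuing the random-halfspace route.
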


\smallskip
The immediate question that this raises is the following: can one reduce the dimension of $\ell_2^2$ metrics, while preserving pairwise distances, \emph{and} the $\ell_2^2$ triangle inequalities? The Johnson-Lindenstrauss lemma~\cite{JohnsonL1982}  reduces the dimension to $O(\log n)$, while preserving pairwise distances approximately. However, this procedure does not preserve the $\ell_2^2$ triangle inequalities, if the original points satisfied them. In fact, Magen and Moharammi~\cite{MagenM2008} prove a strong lower bound against dimension reduction for $\ell_2^2$ metrics. 

\smallskip
It is interesting to note that the Johnson-Lindenstrauss lemma, while not preserving the $\ell_2^2$ triangle inequalities exactly, does preserve them \emph{approximately}, that is, every sequence of $k \leq n$ points $x_{i_1}, \ldots, x_{i_k}$ satisfies $\sum_{j=1}^{k-1} \norm{x_{i_{j}} - x_{i_{j+1}}}_2^2 \geq \beta \cdot \norm{x_{i_1} - x_{i_k}}_2^2$, for some $\beta = \Omega(1)$. An observation by Luca Trevisan (personal communication) shows that, in fact, Goemans' theorem is also true for points satisfying approximate triangle inequalities, but the proof uses the ARV machinery. However, even this does not yield anything better that $O(\sqrt{\log n})$, for approximately $r$-dimensional points, when $r$ is small.

\smallskip
The above discussion motivates one to ask if there is a more `robust' analogue of Goemans' theorem that can be applied to low threshold-rank graphs. Deshpande, Harsha and Venkat~\cite{DHV16} considered this question, and showed that one can prove a similar theorem for the case where the points are in approximately $r$ dimensions, albeit giving a bound of $O(r)$ on the \emph{average} distortion (which suffices for Sparsest Cut). One would expect an exact analogue to have a bound of $O(\sqrt{r})$, and it was left open if one could find such an embedding. 

\smallskip 
We show that there is, indeed, an embedding into $\ell_1$ (in fact, into $\ell_2$, since all our embeddings are one-dimensional) with $O(\sqrt{r})$ average distortion when the points are approximately $r$-dimensional.


\subsection{Our Results}\label{sec:our-results}
In order to state our main result, we use the following definition to quantify the notion of approximate rank of a set of points: 

\begin{definition}($\eta$-Subspace rank) \label{def:subspace-rank}
For any $\eta \in (0,1]$, a set of points $X = \inbrace{x_1, \ldots, x_n} \subseteq \R^d $ will be said to have $\eta$-subspace rank $r$, denoted by $\ssr_\eta(X)=r$, if there exists a subspace given by a projector $\Pi \in \R^{d\times d}$ with $\rk{\Pi} =r $ that satisfies:
\begin{equation}\label{eq:pi-rank}
\sum_{i,j \in [n]} \norm{\Pi x_i - \Pi x_j}_2^2 ~\geq~ \eta \sum_{i,j \in [n]}\norm{x_i - x_j}_2^2 .
\end{equation}
In this work, we will always consider $\eta = \Omega(1)$.
\end{definition}

\begin{remark}\label{rem:top-r}
Since the subspace $\Pi_r$ defined by the top-$r$ left singular vectors of the matrix $M$ with columns $\{x_i - x_j\}_{ij}$ satisfies $\fnorm{\Pi_r M}^2 \geq \fnorm{\tilde{\Pi} M}^2$ for every $\tilde{\Pi}$ with $\rk{\tilde{\Pi}} \leq r $, we can always assume that $\Pi = \Pi_r(M)$ when we need to explicitly use the projections. Also, note that the subspace rank is independent of any scaling or shifting of the points, and is always at most the rank of the point set.
\end{remark}

\smallskip
Deshpande et al.~\cite{DHV16} use a slightly different notion of approximate dimension, called the \emph{stable-rank} of the point set, defined as $\sr{M} = \fnorm{M}^2 / \sigma_1(M)^2$, where $\sigma_1$ is the maximum singular value of the matrix $M$. Clearly, $\sr{M} \leq \ssr_\eta(X)/\eta$, and so points with low subspace rank also have low stable rank. While the stable rank is a well-known proxy for rank (see ~\cite{BourgainT1987,Tropp2009}), for applications to the Sparsest Cut problem, the notion of subspace rank suffices and is natural (see Section~\ref{sec:sparsestcut}). For applications to the Sparsest Cut problem, the notion of subspace rank suffices and is natural (see Section~\ref{sec:sparsestcut}). It would be interesting to see if other notions of approximate rank yield further applications or improvements, in Sparsest Cut, or elsewhere.

\smallskip

Our main result is the following:

\begin{theorem}\label{thm:main-thm}
Given a set of points $X = \inbrace{x_1, \ldots, x_n} \in \R^d$ with $\ssr_\eta(X)=r$ that satisfy the $\ell_2^2$ triangle inequalities, there is an embedding $ X \hookrightarrow \ell_1$ with average distortion at most $O_\eta (\sqrt{r})$. That is, there is a constant $c(\eta)$ and a mapping $h:X \rightarrow \R^{d'}$ that satisfies:
\begin{align}
\norm{h(x_i) - h(x_j)}_1 &~\leq~ \norm{x_i - x_j}_2^2  \qquad \forall i,j \in [n]\\ 
\sum_{i,j \in [n]} \norm{h(x_i) - h(x_j)}_1 &~\geq~ \frac{c(\eta)}{\sqrt{r}}\cdot\sum_{ij} \norm{x_i - x_j}_2^2
\end{align}
\end{theorem}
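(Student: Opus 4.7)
The plan is to adapt Goemans's random-projection technique to the approximately $r$-dimensional setting. After rescaling so that $\tfrac{1}{n^2}\sum_{i,j}\|x_i-x_j\|_2^2=1$, I would fix the projector $\Pi$ witnessing $\ssr_\eta(X)=r$ (so that the projections $y_i = \Pi x_i$ satisfy $\tfrac{1}{n^2}\sum_{i,j}\|y_i-y_j\|_2^2 \geq \eta$) and aim to build a one-dimensional embedding based on a random Gaussian direction $g$ supported on the $r$-dimensional range of $\Pi$. This matches the remark in the introduction that all the embeddings produced will be one-dimensional.

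For $g \sim N(0,I_r)$ on the range of $\Pi$, the linear map $h_g : x \mapsto g^\top \Pi x$ will be the candidate one-dimensional projection; its expected squared pairwise length is $\|y_i-y_j\|_2^2$. To enforce the pointwise upper bound $|h(x_i)-h(x_j)| \leq \|x_i-x_j\|_2^2$, I would combine $h_g$ with a threshold-based rounding along the direction $g$. Since $h_g$ is one-dimensional, this is equivalent to a distribution over hyperplane cuts along $g$, and it automatically produces an $\ell_1$ embedding that contracts pairs at widely differing $\ell_2^2$-scales appropriately.

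The main technical step will be a lower bound of the form $\sum_{i,j}\mathbb{E}_g |h(x_i)-h(x_j)| = \Omega(n^2/\sqrt{r})$. I would establish this by an Arora--Rao--Vazirani-style matching-cover argument: identify $\Omega(n^2)$ pairs with $\|\Pi(x_i-x_j)\|_2 = \Omega(1)$ whose Gaussian projections $|g^\top \Pi(x_i-x_j)|$ are of order $\Omega(1/\sqrt{r})$. The factor $\sqrt{r}$ (rather than the $\sqrt{\log n}$ characteristic of the general ARV setting) arises precisely because $g$ is Gaussian on an $r$-dimensional subspace: a random unit direction in $\mathbb{R}^r$ projects a unit vector to magnitude of order $1/\sqrt{r}$, and the ensuing chaining/concentration arguments lose only $\sqrt{r}$ rather than $\sqrt{\log n}$.

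The hard part will be that the projections $\{y_i\}$ need not satisfy the $\ell_2^2$ triangle inequality, which is exactly what the standard ARV chaining argument requires. My plan to handle this is to pull the chaining back to the original points: whenever the analysis calls for an $\ell_2^2$-triangle-type bound along a chain of projections $y_{i_1},\ldots, y_{i_k}$, I would instead apply the genuine $\ell_2^2$ triangle inequality on the corresponding chain $x_{i_1},\ldots, x_{i_k}$ (which holds by hypothesis) and then exploit the contraction $\|\Pi(x_i-x_j)\|_2 \leq \|x_i-x_j\|_2$ to transfer the estimate back to the $y_i$'s. Together with the $\eta$-fraction mass property of $\Pi$, this substitution is what I expect to yield the $O(\sqrt{r})$ average distortion, improving on the $O(r)$ bound of~\cite{DHV16}.
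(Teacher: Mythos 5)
Your core technical idea coincides with the paper's: run the ARV/NRS chaining argument on the projections $\Pi x_i$ (where concentration of measure in $r$ dimensions, rather than a union bound over $n$ points, is what produces the $\sqrt{r}$), and whenever the chaining needs an $\ell_2^2$ triangle inequality along a path $i_1,\dots,i_k$, apply it to the original points and transfer back via $\norm{\Pi(x_i-x_j)}_2 \leq \norm{x_i-x_j}_2$. This is precisely the chain of inequalities at the heart of the paper's Theorem~\ref{thm:const-spread}, so for the well-spread case your plan is sound. Two caveats on the form of the embedding: the paper does not use the linear map $g^\top \Pi x$ as the embedding itself but only (implicitly, via the separating-hyperplane algorithm) to locate two $\Omega(n)$-sized sets at $\ell_2^2$-distance $\Omega(1/\sqrt{r})$, and then embeds by the \Frechet map $i \mapsto d(i,S)$, which is automatically contractive. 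Your linear map is not contractive against $\ell_2^2$ distances (for a pair at squared distance $\eps$ the typical projection is of order $\sqrt{\eps/r} \gg \eps$), so the ``threshold-based rounding'' is carrying real weight and would need to be spelled out; the \Frechet route avoids this entirely.

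The genuine gap is the reduction to the constant-spread case. Normalizing $\frac{1}{n^2}\sum_{i,j}\norm{x_i-x_j}_2^2 = 1$ does not place the points in a bounded ball, and the chaining argument (Proposition 3.11 of NRS, or any measure-concentration step) needs both a bounded diameter and constant average spread simultaneously. The paper spends Section~\ref{sec:general-case} on this: either some ball $B(i,1/12)$ is dense (then $d(\cdot,S)$ already gives $O(1)$ average distortion), or one isolates a ball $S = B(j_0,12/9)$ containing $\Omega(n)$ points with $\Omega(n^2)$ total squared distance inside it. Crucially, there is then a further dichotomy you have not accounted for: the \emph{projections} of the points of $S$ may themselves be clustered even though the full point set has $\eta$-subspace rank $r$ (the projected mass guaranteed by $\ssr_\eta(X)=r$ is a global average and need not survive restriction to $S$). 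In that subcase the chaining argument has nothing to work with, and the paper instead builds a \Frechet embedding from the \emph{projected} distances $d_f(i,T)$ to a dense ball $T$ in the projection, using an approximate triangle inequality for $d_f$ (Observation~\ref{obs:no-triang}) and the contraction $d(i,T) \geq d_f(i,T)$ to get $O(1)$ average distortion (Lemma~\ref{prop:ball-projection-bad}). Without this case analysis your argument proves only the special case handled by Theorem~\ref{thm:const-spread}, not the full statement.
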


\smallskip

This matches Goemans' theorem in terms of the dependence on $r$, albeit for average-case distortion.  Since the subspace rank is an \emph{average} global condition on the point set, we cannot hope to prove a worst-case distortion guarantee like Goemans' theorem that depends only on the subspace rank (see Appendix~\ref{sec:no-worst-case}). 

\smallskip

The above theorem holds even if the points satisfy the $\ell_2^2$ triangle inequalities only \emph{approximately}, since the steps in the analysis of the algorithm only need the points to satisfy an approximate version of the triangle inequalities\footnote{The points are said to satisfy \emph{approximate} $\ell_2^2$ triangle inequalities, if every sequence of $k \leq n$ points $x_{i_1}, \ldots, x_{i_k}$ satisfies $\sum_{j=1}^{k-1} \norm{x_{i_{j}} - x_{i_{j+1}}}_2^2 \geq \beta \cdot \norm{x_{i_1} - x_{i_k}}_2^2$, for some $\beta = \Omega(1)$}. Improving on the $\sqrt{r}$ bound above with any technique that works with approximate triangle inequalities would imply an improvement over the ARV algorithm's guarantee, since dimension reduction using the Johnson-Lindenstrauss~\cite{JohnsonL1982} transform preserves pairwise distances (and hence the $\ell_2^2$ inequalities) approximately, while reducing the dimension to $O(\log n)$. Note that this, thus, recovers the unconditional guarantee of $O(\sqrt{\log n})$ of the ARV algorithm, but gives better results for points in lower approximate dimension. This is unsurprising, since our techniques do build on the ARV analysis.

\medskip

Our main result immediately implies a $O(\sqrt{r})$ approximation algorithm for the Uniform Sparsest Cut problem on low threshold-rank graphs, using just the Goemans-Linial SDP.

\begin{corollary}\label{cor:main-cor}
Let $\epsilon \in (0,1]$. Given a regular graph $G$ with $r$-th smallest eigenvalue of the normalized Laplacian satisfying $\lambda_r(G) \geq \Phi_{SDP}(G)/(1-\epsilon)$, we can find a $O_{\epsilon}(\sqrt{r})$ approximation to the sparsest cut in the graph using SDP-1.
\end{corollary}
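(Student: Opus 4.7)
The plan is to show that the spectral hypothesis forces any optimal SDP-1 solution to have $\Omega_\epsilon(1)$-subspace rank at most $r$ in the sense of Definition~\ref{def:subspace-rank}, then invoke Theorem~\ref{thm:main-thm} to obtain an embedding into $\ell_1$ with average distortion $O_\epsilon(\sqrt{r})$, and finally round via the standard reduction of~\cite{LinialLR1995,AumannR1998} to produce a cut of sparsity within $O_\epsilon(\sqrt{r})$ of $\Phi_{SDP}(G)\le \Phi(G)$.

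Let $\inbrace{x_i}$ be an optimal SDP-1 solution, translated so that $\sum_i x_i = 0$ (this leaves both the objective and the constraint unchanged). Let $X\in\R^{n\times d}$ be the matrix with rows $x_i^\top$ and let $K=XX^\top$ be its Gram matrix. The normalization constraint gives $\tr{K}=n/2$, while the objective evaluates to $\tr{LK}=\tfrac12\, n^2\,\Phi_{SDP}(G)$, where $L$ is the graph Laplacian with eigenvalues $0=\mu_1\le\mu_2\le\cdots\le\mu_n$; the hypothesis, under the appropriate normalization, translates to $\mu_r \ge n\,\Phi_{SDP}(G)/(1-\epsilon)$. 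Let $U$ be the orthogonal projector onto the span of the bottom $r-1$ eigenvectors of $L$ and set $V=I-U$. Since $L$ commutes with both $U$ and $V$, and $VLV \succeq \mu_r\, V$ on the range of $V$,
\[
\tr{LK} \;\ge\; \tr{VLVK} \;=\; \tr{LVKV} \;\ge\; \mu_r\,\tr{VKV}.
\]
Plugging in the spectral hypothesis yields $\tr{VKV}\le (1-\epsilon)\,\tr{K}$, and hence $\tr{UKU}\ge \epsilon\,\tr{K}$.

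By Ky Fan's variational principle, $\sum_{i=1}^{r-1}\sigma_i^2(X^\top X) = \sum_{i=1}^{r-1}\sigma_i^2(K) \ge \tr{UKU}$. Letting $\Pi\in\R^{d\times d}$ be the orthogonal projector onto the top $r-1$ right singular vectors of $X$, we therefore get $\tr{\Pi X^\top X \Pi}\ge \tr{UKU}\ge \epsilon\,\tr{X^\top X}$. Using the centering assumption, the identities $\sum_{ij}\norm{\Pi x_i-\Pi x_j}_2^2 = 2n\,\tr{\Pi X^\top X \Pi}$ and $\sum_{ij}\norm{x_i-x_j}_2^2 = 2n\,\tr{X^\top X}$ then give $\ssr_\epsilon(X)\le r-1\le r$. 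Theorem~\ref{thm:main-thm} now supplies the $\ell_1$ embedding with average distortion $O_\epsilon(\sqrt{r})$, and the LLR/AR rounding produces the claimed cut. The main obstacle is bundled inside Theorem~\ref{thm:main-thm}; granted that result, the corollary reduces to the short spectral-truncation argument above, the only care required being to match the normalization of $\lambda_r(G)$ in the hypothesis to the combinatorial Laplacian eigenvalue $\mu_r$ used in the bound.
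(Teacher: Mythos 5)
Your proof is correct in substance, but it takes a genuinely different route from the paper. The paper's proof is a two-line reduction: it cites the von Neumann--type inequality of Guruswami and Sinop (Proposition~\ref{prop:laplacian}) as a black box, observes that $\sum_{i=1}^{l}\sigma_i^2 = \sum_{i<j}\norm{\Pi_l(x_i-x_j)}_2^2$, concludes $\ssr_\epsilon(X) = r-1$, and applies Theorem~\ref{thm:main-thm}. You instead re-derive the needed spectral fact from scratch: centering the solution, splitting the Laplacian quadratic form along the projectors $U,V$ onto the bottom-$(r-1)$ and remaining eigenspaces, using $\tr{LK}\ge \mu_r\tr{VKV}$ to bound the mass outside a rank-$(r-1)$ subspace, and then transferring to the top singular directions of $X$ via Ky Fan. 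This is essentially a self-contained proof of the special case of the Guruswami--Sinop inequality that the corollary needs, which makes your argument longer but independent of the citation; the identity $\sum_{ij}\norm{x_i-x_j}_2^2 = 2n\tr{X^\top X}$ under centering also correctly reconciles your point-based projector with the difference-vector formulation in Definition~\ref{def:subspace-rank}, since the definition is existential. Two minor bookkeeping points: your Ky Fan step should speak of eigenvalues of $K$ (equivalently $\sigma_i(X)^2$) rather than ``$\sigma_i^2(X^\top X)$''; and the normalization mismatch you flag at the end is real --- the corollary statement reads $\lambda_r(G)\ge \Phi_{SDP}/(1-\epsilon)$ while the paper's own proof (and Proposition~\ref{prop:laplacian}) actually use $\lambda_r(G)/n\ge \Phi_{SDP}/(1-\epsilon)$, and your $\mu_r\ge n\Phi_{SDP}/(1-\epsilon)$ matches the latter form (after the regularity rescaling), so you have correctly identified which version of the hypothesis the argument consumes.
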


This improves upon the previously known guarantee of $O(r/\eps)$ using the Goemans-Linial SDP in \cite{DHV16}, under the same precondition.

\medskip
\noindent{\textbf{Proof Techniques}}:\\
In order to prove our main result, we follow the generic approach of the ARV algorithm~\cite{AroraRV2009} that proceeds in two steps: If there is a dense cluster of the solution vectors, then a specific \Frechet embedding (see Section~\ref{sec:notation} for a definition)  works. If not, then the solutions are `well-spread', and one can always find two $\Omega(n)$-sized sets that are $O(1/\sqrt{\log n})$-apart in $\ell_2^2$ distance, using a separating hyperplane algorithm. This constitutes the core of the proof, and the analysis involves a `chaining argument' which relies on the concentration of measure in high-dimensional spaces. These well-separated sets can then be used to construct a good \Frechet embedding into $\ell_1$.

\smallskip
In our case, we would analogously like to find two large sets that are $\Omega(1/\sqrt{r})$-apart, and to do this, we need to work with the \emph{projections} of the points. Note that the projections need not be in $\ell_2^2$, while the ARV algorithm's analysis requires the use of $\ell_2^2$ triangle inequalities at various  points. 

\smallskip
Thus, in order to prove Theorem~\ref{thm:main-thm}, we follow and adapt the techniques in Naor, Rabani and Sinclair~\cite{NRS05} (henceforth called the NRS analysis). Their work generalized the ARV algorithm's analysis to apply to the more general case of metrics \emph{quasisymmetrically embeddable} into $\ell_2$, which includes $\ell_2^2$ as a special case. We do not need the complete machinery developed by them, though, and extend only a part of their analysis to our setting. In particular, the chaining argument in \cite{NRS05} works in Euclidean, rather than $\ell_2^2$ space, making it useful in our case. 

\smallskip
Our result, thus, also demonstrates the utility of isolating the chaining argument from the use of $\ell_2^2$ triangle inequalities in the ARV algorithm's analysis.

\subsection{Other related Work}
We recall that the best known upper bound for the worst-case distortion of
embedding $\ell_2^2 \hookrightarrow \ell_1$ is $O(\sqrt{\log n} \cdot \log \log n)$ by \cite{AroraLN2008}, building on the techniques in~\cite{AroraRV2009,Lee2005}. The best known lower bound is $\Omega(\sqrt{\log n})$ for worst-case distortion~\cite{NaorY2017}, and
$\exp(\Omega(\sqrt{\log \log n}))$ for average distortion~\cite{KaneM2013}. On \emph{low threshold-rank} graphs (where $\lambda_r \geq \Omega(1) \Phi_{SDP}$), an approximation guarantee of $O(1)$ for Sparsest Cut was obtained using $O(r)$ levels of the Lasserre hierarchy for SDPs~\cite{GuruswamiS2013}. In contrast, the works \cite{DeshpandeV2014,DHV16} obtained a weaker $O(r)$ approximation, but using just the basic SDP relaxation. Oveis Gharan and Trevisan \cite{GharanT2013} also give a rounding algorithm for the basic SDP relaxation on low-threshold rank graphs, but require a stricter pre-condition on the eigenvalues ($\lambda_r \gg \log^{2.5} r \cdot \Phi(G)$), and leverage it to give a stronger $O(\sqrt{\log r})$-approximation guarantee. Their improvement comes from a new structure theorem on the SDP solutions of low threshold-rank graphs being clustered, and using the techniques in ARV for analysis.

Kwok~et al.~\cite{KwokLLGT2013} showed that a better analysis of Cheeger's inequality gives a $O(r \cdot \sqrt{1/\lambda_{r}})$ approximation to the sparsest cut on regular graphs. In particular, when $\lambda_r(G) \geq \epsilon$, this gives a $O(r/\sqrt \epsilon)$ approximation. Note that our result gives a better approximation in this setting (see Section~\ref{sec:sparsestcut}).


\section{Notation}\label{sec:notation}


We use $[n]=\{1,\ldots,n\}$. For a matrix $M \in \mathbb{R}^{d\times d}$, we say $M\succeq 0$ or $M$ is positive-semidefinite (psd) if $y^TXy \geq 0$ for all $y\in \mathbb{R}^d$. The unit Euclidean Ball in $\R^d$ is denoted by $B_2^d$. 

\medskip 
\noindent \textbf{Graphs and Laplacians:} All graphs will be defined on a vertex set $V = [n]$ of size $n$. The vertices will usually be referred to by indices $i,j,k,l \in[n]$. Given a graph with a symmetric weight function on pairs $W:V \times V \mapsto \mathbb{R}^+$, with $W(i,i)=0 \, \forall i$,  let $ D(i) \coloneq \sum_j W(i,j)$ be the degree of vertex $i \in V$. The (normalized) graph Laplacian matrix is defined as:
\begin{align*}
 L_W(i,j) :=
 \begin{cases}
 -\frac{W(i,j)}{\sqrt{D(i)D(j)}} &\quad \text{if $i\neq j$} \\
 1 &\quad \text{if $i=j$}. \\
 \end{cases}
\end{align*}

Note that $L_W \succeq 0$. We will denote the eigenvalues of (the Laplacian of) the graph $G$ by $0 = \lambda_1(G) \leq \lambda_2(G) \ldots \leq \lambda_n(G)$, in \emph{increasing} order. If the graph is $c$-\emph{regular}, we have $D(i) = c$ for every $i \in V$.  Note that $c$ might be a fraction.

\medskip
For nodes $i,j$ in $G$, $d_G(i,j)$ is the shortest path between vertices $i,j$ in $G$. For $S \subseteq [n]$, $G[S]$ is the subgraph induced by $G$ on $S$. The \emph{vertex expansion} of $G$, denoted by $h(G)$ is defined as the largest constant $h$ such that for every set $S \subseteq V$ with $1 \geq |S| \geq |V|/2$, $\abs{N_G(S)} \geq h |S|$ where $N_G(S) = \inbrace{j \in V \suchthat d_G(j,S)=1}$.

\medskip
\noindent \textbf{Embeddings and cuts:} For our purposes, a (semi-)metric space $(X, d)$  consists of a finite set of points $X=\{x_1 , x_2, \ldots, x_n\}$ and a distance function $d: X \times X \mapsto \R_{\geq0}$ satisfying the following three conditions:
\begin{enumerate}
\item $d(x,x)=0$, $\forall x\in X$.
\item $d(x,y)=d(y,x)$.
\item (Triangle inequality) $d(x,y)+d(y,z) \geq d(x,z)$.
\end{enumerate}
An \emph{embedding} from a metric space $(X, d)$ to a metric space $(Y, d')$ is a mapping $f: X \rightarrow Y$. The embedding is called a \emph{contraction}, if
$$d'(f(x_i), f(x_j)) \leq d(x_i, x_j), \qquad\forall x_i, x_j \in X.$$
For convenience, we will only deal with contractive mappings in this paper (this is without loss of generality).
A contractive mapping is said to have (worst-case) distortion $\Delta$, if:  $\sup_{i,j} \frac{d(x_i,x_j)}{d'(f(x_i),f(x_j))} \leq \Delta$.  It is said to have \emph{average} distortion $\beta$, if $\frac{\sum_{i<j}d(x_i,x_j)}{\sum_{i<j}d'(f(x_i),f(x_j))} \leq \beta.$

Note that a mapping with worst-case distortion $\Delta$ also has average distortion $\Delta$, but not necessarily vice-versa.
\smallskip
\emph{\Frechet} embeddings of $(X, d)$ are a class of embeddings of $X \rightarrow \R^k$ into defined on the basis of distances to point sets: a co-ordinate of the embedding will be given by a map of the form $d(x_i, S) \coloneq \min_{j \in S} d(x_i, x_j)$ for some $S\subseteq X$. Note that \Frechet embeddings are always contractive in every co-ordinate.

\medskip
When $X \subseteq \R^k$ is a $\ell_2^2$ space, we will use $d(i,j) \coloneq \norm{x_i-x_j}_2^2$, and $d(S, T) = \min_{i\in S, j\in T} d(i,j)$ for $S, T \subseteq [n]$. For $c\in \R$, $B(i,c) \coloneq \inbrace{j \suchthat d(i,j) \leq c }$. We refer to the quantity $\frac{1}{n^2}\sum_{i,j} \norm{x_i - x_j}_2^2$ as the \emph{spread} of these points.

\section{Proof of Main Theorem}\label{sec:proof-of-main-theorem}

\subsection{Proof Outline}\label{sec:proof-outline}

We prove Theorem~\ref{thm:main-thm} in two steps. First, we scale the points to lie within a $\ell_2$ ball of radius $1$; note that this would shrink the pairwise distances. Suppose that the points have \emph{constant spread} after this scaling; i.e. they satisfy
\begin{equation}\label{eq:const-spread}
\frac{1}{n^2}\sum_{i,j \in V} \norm{x_i - x_j}_2^2 \geq \delta , \qquad \text{where }\delta=\Omega(1). 
\end{equation}

Since scaling does not affect the subspace rank, we continue to have $\ssr_\eta(X) = r$. In this case, we adapt the chaining argument from ~\cite{NRS05} to work on the \emph{projections} $\inbrace{\Pi x_i}_{i \in V}$ to conclude the existence of two large, $\Delta$-separated sets for $\Delta = \Omega(1/\sqrt{r})$.

In the general case, we show that by appropriately utilizing the subspace criterion, we can either reduce it to the case of constant spread, or produce an $O(1)$ distortion \Frechet embedding by considering distances to an appropriate $\ell_2^2$ ball centered at one of the points.

Let $V \coloneq [n]$. We will require the following definitions, following ~\cite{AroraRV2009}:
\begin{definition}[Largeness]
   A subset $A \subseteq V$ is $\beta$-large, if $|A|\geq \beta n$.
   
  \end{definition}
  
\begin{definition}[$\Delta$-separation]
   Subsets $L\subseteq V$ and $R \subseteq V$ are $\Delta$-separated, if $d(L,R)~\geq~\Delta$
\end{definition}

The following lemma, implicit in \cite{AroraRV2009}, gives a sufficient condition for the existence of a  \Frechet embedding into $\ell_1$ with low average distortion.

\begin{lemma}[Sufficient condition] \label{lem:suff-condition}
If there is a set $S\subseteq [n]$ satisfying
\begin{equation}\label{eq:suff-eq}
|S| \sum_{i\notin S} d(i,S) \geq c. n^2
\end{equation}
Then, there is an embedding of the points into $\ell_1$ with  average distortion $1/c$.
\end{lemma}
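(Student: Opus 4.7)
The plan is to exhibit a one-dimensional \Frechet embedding induced by the set $S$ itself. Define $h : X \to \R$ by $h(x_i) \coloneq d(x_i, S) = \min_{k \in S} \norm{x_i - x_k}_2^2$, and view the image as a subset of $\R$ equipped with the $\ell_1$ (i.e.\ absolute value) distance. I will show (i) that $h$ is a contraction from $(X,d)$ to the line, and (ii) that the hypothesis of the lemma directly lower bounds the total $\ell_1$-spread of the image. Combined with the spread normalization $\sum_{i,j} \norm{x_i - x_j}_2^2 = n^2$ (which, by an affine rescaling, we may assume holds up to a harmless constant), these two facts yield the claimed average distortion bound of $1/c$.

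For contractivity, the usual one-line \Frechet argument goes through because the $\ell_2^2$ triangle inequality~\eqref{eq:l2-squared} has exactly the form of an ordinary triangle inequality for the distance function $d(x,y) = \norm{x-y}_2^2$. Concretely, for any $i,j$, assume without loss of generality that $h(x_i) \geq h(x_j)$ and pick $k \in S$ with $h(x_j) = d(x_j, x_k)$. Then $h(x_i) \leq d(x_i, x_k) \leq d(x_i, x_j) + d(x_j, x_k) = d(x_i, x_j) + h(x_j)$ by~\eqref{eq:l2-squared}, so $|h(x_i) - h(x_j)| \leq d(x_i, x_j)$.

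For the lower bound on $\sum_{i<j} |h(x_i) - h(x_j)|$, I would simply discard every pair not straddling $S$. Since $h$ vanishes on $S$ and equals $d(j,S)$ on $V \setminus S$, the surviving contribution is exactly
\[
\sum_{i \in S,\, j \notin S} |h(x_i) - h(x_j)| \;=\; |S| \sum_{j \notin S} d(j, S) \;\geq\; c\, n^2,
\]
by the hypothesis~\eqref{eq:suff-eq}. Dividing by the (normalized) spread $\sum_{i<j} d(x_i, x_j)$, which is at most $n^2$, immediately gives average distortion at most $1/c$.

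There is no real obstacle here; this is a routine verification that the standard \Frechet construction continues to work with $\ell_2^2$ in place of an ordinary metric. The only point that even requires comment is the contractivity step, and it is handled by the fact that the $\ell_2^2$ triangle inequality is structurally identical to a triangle inequality for the distance $(x,y)\mapsto \norm{x-y}_2^2$.
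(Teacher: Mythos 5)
Your proposal is correct and follows essentially the same route as the paper: the one-dimensional \Frechet map $i \mapsto d(i,S)$, contractivity from the $\ell_2^2$ triangle inequality, and the lower bound on the image spread obtained by restricting to pairs straddling $S$, which is exactly the hypothesis \eqref{eq:suff-eq}. The only difference is that you spell out the contraction step and the normalization $\sum_{i,j}\norm{x_i-x_j}_2^2 = n^2$, which the paper leaves implicit.
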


\begin{proof}
Consider the embedding $i \mapsto d(i,S)$. Clearly, this is a \Frechet embedding, and hence a contraction. Furthermore, we have:
\begin{align*}
\sum_{i,j\in V} \abs{d(i,S) - d(j,S)} & \geq \sum_{i \notin S, j\in S} \abs{d(i,S)-0}\\
&= |S|\sum_{i \notin S} d(i,S) ~
~ \geq c n^2\end{align*} Thus, the average distortion of the map is at most $1/c$.\end{proof}

Note that the existence of two $\Omega(1)$-large, $\Delta$-separated sets $L, R$ would satisfy the above condition, with $S = L$ and $c = O(1/\Delta)$. The above can also be thought of as an embedding into $\ell_2$, since it is one-dimensional.


\subsection{The constant spread case}\label{sec:const-spread-case}

We will start by stating the following Proposition, which is a simple modification of Proposition 3.11 in \cite{NRS05}. Since the proof closely follows the original, requiring only a simple observation, we do not give it here.

\begin{proposition}[From Proposition 3.11 in \cite{NRS05}]\label{prop:euclidean-ball-lemma}
Let $G=(V,E)$ be graph with vertex expansion $h(G)\geq 1/2$.  Let $f:V \rightarrow B_2^d$ be  a mapping that satisfies:
\begin{equation}\label{eq:ell-2-spread}
\frac{1}{n^2}\sum_{i,j \in V} \norm{f(i) - f(j)}_2 \geq \gamma 
\end{equation}

Then, there exists a pair $i,j\in V$, and constants $c_1(\gamma), c_2(\gamma)$ such that 
\begin{equation} 
\norm{f(i)-f(j)}_2 \geq c_1(\gamma) \quad \text{and } \quad d_G(i,j) \leq c_2(\gamma) \sqrt{d}
\end{equation}
\end{proposition}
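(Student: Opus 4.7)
The plan is to argue by contradiction, adapting the Euclidean chaining argument underlying Proposition~3.11 of \cite{NRS05}. Assume that no pair $(i,j)$ simultaneously satisfies $\|f(i)-f(j)\|_2 \geq c_1(\gamma)$ and $d_G(i,j) \leq c_2(\gamma)\sqrt{d}$, and derive a contradiction with the spread condition~\eqref{eq:ell-2-spread}.

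The first step is to reduce to a one-dimensional problem via a random projection. For a uniformly random unit vector $u\in S^{d-1}$ and any $x \in \R^d$, $\mathbb{E}_u[|\langle u, x\rangle|] = \Theta(\|x\|_2/\sqrt{d})$; averaging this against~\eqref{eq:ell-2-spread} produces a direction $u$ for which the projected values $\tilde f(i)\coloneq\langle u,f(i)\rangle \in [-1,1]$ satisfy $\frac{1}{n^2}\sum_{i,j} |\tilde f(i)-\tilde f(j)| = \Omega(\gamma/\sqrt{d})$. The one-dimensional identity $\sum_{i,j}|\tilde f(i)-\tilde f(j)| = 2\int_{-1}^{1} |A_t|(n-|A_t|)\, dt$ with $A_t = \{i:\tilde f(i)\leq t\}$ then yields a threshold $\tau$ and a margin $\Delta = \Omega(\gamma/\sqrt{d})$ such that $L \coloneq \{i:\tilde f(i)\leq \tau-\Delta\}$ and $R \coloneq \{i:\tilde f(i) \geq \tau+\Delta\}$ each have size $\Omega(n)$, with projection gap $2\Delta$. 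The vertex-expansion hypothesis $h(G)\geq 1/2$ next ensures that BFS neighborhoods of $L$ grow geometrically, so $d_G(L,R)$ is small, and one can locate short-graph-distance pairs crossing this projection gap.

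The key step is then to amplify the weak $\ell_2$ lower bound $2\Delta = \Omega(\gamma/\sqrt{d})$ up to a dimension-free $c_1(\gamma)$ by iterating in the NRS chaining style: restrict to nested cores of size $\Omega(n)$ on which the projected spread is still $\Omega(\gamma/\sqrt{d})$, and concatenate the per-round short-graph-distance steps. Each round contributes an $\ell_2$ displacement of $\Omega(\gamma/\sqrt{d})$; after $O(\sqrt{d})$ rounds, the cumulative displacement reaches $\Omega(\gamma)$ while the total graph distance stays $O(\sqrt{d})$, yielding the desired pair.

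The main technical obstacle will be showing that the per-round displacements aggregate coherently in $\ell_2$ (and not only in the single projection direction $u$), so that the cumulative lower bound on $\|f(i)-f(j)\|_2$ grows to a dimension-free constant while the graph-distance budget stays within $c_2(\gamma)\sqrt{d}$. This is precisely where it matters that we live in Euclidean space rather than $\ell_2^2$ space: the ordinary triangle inequality lets us sum the per-round Euclidean displacements, which is what turns the naive $d$-step random-walk bound into a $\sqrt{d}$ bound on the total graph distance. This is essentially the chaining analysis of \cite{NRS05}; the only substantive modification is to feed the actual ambient dimension $d$ into the concentration-of-measure estimates, in place of the log-dimensional reduction implicit in the ARV/JL setting.
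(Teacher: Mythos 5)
Your outline reproduces the right high-level ingredients of the NRS/ARV argument (random projection, two large sets separated in the projection, vertex expansion to bound the graph distance between them, then a chaining/amplification step), but the step that carries all of the difficulty --- amplifying the separation from $\Omega(\gamma/\sqrt{d})$ to a dimension-free $c_1(\gamma)$ --- is asserted rather than proved, and the justification you offer for it is backwards. Summing the per-round Euclidean displacements via the triangle inequality gives an \emph{upper} bound on $\norm{f(i_1)-f(i_k)}_2$, not a lower bound: the steps could cancel, so ``each round contributes $\Omega(\gamma/\sqrt{d})$ in $\ell_2$'' does not imply the endpoints end up far apart. In the actual chaining argument the per-round gains accumulate because the walk is monotone in the \emph{fixed projection direction} $u$, so it is the scalars $\langle u, f(\cdot)\rangle$ that add coherently, and one concludes via $\norm{f(i_1)-f(i_k)}_2 \geq \langle u, f(i_k)-f(i_1)\rangle \geq k\sigma$. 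Moreover, sustaining $\Omega(\sqrt{d})$ such rounds is precisely where the measure-concentration bookkeeping of stretched pairs (disjoint matchings, shrinking cores, union bounds over the matchings) enters; one cannot simply ``restrict to nested cores of size $\Omega(n)$ on which the projected spread is still $\Omega(\gamma/\sqrt{d})$,'' since after one round the separated sets have been consumed and the existence of large sets separated by $t\sigma$ for each successive $t$ is exactly the inductive content of Proposition~3.11 of \cite{NRS05}.

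For comparison, the paper does not re-derive the chaining argument at all: it invokes Proposition~3.11 of \cite{NRS05} and records, in the remark following the statement, the single modification needed to turn the $\sqrt{\log n}$ of the original into $\sqrt{d}$ --- namely, that the union bound over all $n^2$ pairs in the last step of the NRS proof can be replaced by the deterministic bound $\langle f(i)-f(j), u\rangle \leq \norm{f(i)-f(j)}_2 \leq c_1(\gamma)$, valid for \emph{every} unit vector $u$. Your proposed modification (``feed the ambient dimension $d$ into the concentration-of-measure estimates'') is not this observation and does not by itself explain where the dependence on $n$ disappears. To make your write-up a proof, you would need to follow the proof of Proposition~3.11 in \cite{NRS05} step by step, make the monotone-in-$u$ aggregation explicit, and isolate the one place where $\sqrt{\log n}$ arises so as to replace it as the paper indicates.
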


\begin{remark}
The modification only requires the observation that for any $i , j$ with  $\norm{f(i)-f(j)}_2 \leq c_1(\gamma)$, and $u \suchthat \norm{u}_2 =1$,  $\langle f(i) - f(j) , u \rangle \leq c_1(\gamma)$. This avoids a union bound over the pairs of points in the last step of the proof, the rest of the steps being identical. Combined with the original statement of Proposition 3.11 in \cite{NRS05}, the term $\sqrt{d}$ in the above can be replaced by $\min \inbrace{\sqrt{\log n}, \sqrt{d}}$.
\end{remark}


\medskip
We now proceed to prove a special case of Theorem~\ref{thm:main-thm} assuming condition \eqref{eq:const-spread}.

\begin{theorem}\label{thm:const-spread}
Let  $X= \inbrace{x_1, \ldots, x_n}$ satisfy $\ell_2^2$-triangle inequalities, with $X\subseteq B_2^d$ and $\ssr_\eta(X) = r$. Furthermore, suppose that $$\frac{1}{n^2}\sum_{ij} \norm{x_i - x_j}_2^2 \geq \delta , \qquad \text{where }\delta=\Omega(1).$$
Then there exist sets $A, B \subseteq X$, with $|A|, |B| \geq (\eta \delta/32)n$ with $d(A,B) \geq \Omega(1/\sqrt{r})$.
\end{theorem}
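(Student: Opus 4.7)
The plan is to adapt the ARV-style separating-hyperplane and chaining argument and apply it to the projections $y_i := \Pi x_i$, which live in the $r$-dimensional range of $\Pi$. First, note that since $\Pi$ is a contraction and $\|x_i\|_2 \le 1$, each $y_i$ lies in $B_2^r$ (identifying the image of $\Pi$ with $\R^r$). The subspace rank condition gives $\frac{1}{n^2}\sum_{i,j}\|y_i - y_j\|_2^2 \ge \eta\delta$, and since $\|y_i - y_j\|_2 \le 2$ we also obtain the Euclidean (non-squared) lower bound $\frac{1}{n^2}\sum_{i,j}\|y_i - y_j\|_2 \ge \eta\delta/2$. This will serve as the input $\gamma = \Omega(\eta\delta)$ required by Proposition~\ref{prop:euclidean-ball-lemma}.

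I will then argue by contradiction. Suppose no two sets $A, B$ of size $\ge (\eta\delta/32)n$ are $\Delta$-separated in $\ell_2^2$, where $\Delta := c/\sqrt{r}$ for a small constant $c = c(\eta\delta) > 0$ to be chosen. Consider the ``closeness graph'' $G$ on $V$ with $(i,j) \in E(G)$ iff $\|x_i - x_j\|_2^2 \le \Delta$. Using a standard ARV-style cleanup, I aim to produce a subset $V' \subseteq V$ with $|V'| \ge (1 - \eta\delta/16)n$ on which the induced subgraph $G[V']$ has vertex expansion $\ge 1/2$: if some $S \subseteq V'$ with $|S| \le |V'|/2$ violated this, then $S$ and $V' \setminus N_G[S]$ would be two $\Omega(\eta\delta)n$-sized sets that are $\Delta$-separated, contradicting our assumption. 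Crucially, since $|V \setminus V'|$ is small and each $\|y_i - y_j\|_2 \le 2$, the Euclidean spread of $y$ restricted to $V'$ remains $\Omega(\eta\delta)$.

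Next, apply Proposition~\ref{prop:euclidean-ball-lemma} to $G[V']$ with $f(i) := y_i$ in dimension $r$ (the remark following the proposition permits the $\sqrt{r}$ dependence, as the projections effectively live in an $r$-dimensional subspace). This yields a pair $i, j \in V'$ and constants $c_1, c_2$ depending only on $\eta\delta$ such that $\|y_i - y_j\|_2 \ge c_1$ and $d_{G[V']}(i,j) \le c_2\sqrt{r}$. Iterating the $\ell_2^2$ triangle inequality along a shortest $i$--$j$ path in $G[V']$ of length $k \le c_2\sqrt{r}$, each edge of which has $\ell_2^2$ length at most $\Delta$, yields $\|x_i - x_j\|_2^2 \le k\Delta \le c\,c_2$. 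On the other hand, since $\Pi$ is a contraction, $\|x_i - x_j\|_2^2 \ge \|\Pi x_i - \Pi x_j\|_2^2 = \|y_i - y_j\|_2^2 \ge c_1^2$. Choosing $c < c_1^2/c_2$ gives the desired contradiction.

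The main technical hurdle will be establishing the vertex-expanding subgraph $G[V']$ while preserving the Euclidean spread of the projections; the exact accounting of how many vertices can be discarded without losing a constant fraction of the spread is what pins down the $\eta\delta/32$ threshold in the statement. The key conceptual point is that the NRS chaining machinery, being Euclidean (rather than $\ell_2^2$) in nature, can be cleanly applied to the projections $y_i$—which themselves need not satisfy any $\ell_2^2$ inequalities—and it is precisely this that enables the improved $\sqrt{r}$ dependence (rather than $r$) while invoking the $\ell_2^2$ structure only at the final chaining step on the original points $x_i$.
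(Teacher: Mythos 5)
Your proposal is correct and follows essentially the same route as the paper's proof: pass to the projections, build the $\ell_2^2$-closeness graph with threshold $\Theta(1/\sqrt r)$, extract a large vertex-expanding subgraph (the paper cites Lemma~2.3 of Naor--Rabani--Sinclair for this cleanup), apply the Euclidean chaining proposition to the projected points, and derive the contradiction by combining the $\ell_2^2$ path inequality on the original points with the contractivity of $\Pi$. The only cosmetic difference is that you obtain the unsquared spread bound via $\|y_i-y_j\|_2 \ge \tfrac12\|y_i-y_j\|_2^2$, whereas the paper uses a counting argument over pairs at squared distance $\ge \eta\delta/4$; both suffice.
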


\begin{proof}
Let $\Pi$ be the $r$-dimensional subspace containing an $\eta$ fraction of the squared lengths of the difference vectors upon projection. Let $V = [n]$, and define $f:V\rightarrow B_2^r$ by 
\[
f(i) \eqdef \Pi x_i
\]
Since the set $X$ has $\eta$-subspace rank $r$, we have, by definition:
\begin{equation}\label{eq:proj-spread}
\frac{1}{n^2} \sum_{i,j\in V} \norm{f(i)-f(j)}_2^2 \geq \eta\delta .
\end{equation}

We will now follow the proof of Theorem 2.4 in \cite{NRS05}, but switch to the projections where appropriate. Consider the graph $G=(V,E)$ with edges  $E=\inbrace{\inbrace{i,j} \suchthat \norm{x_i - x_j}_2^2 \leq \frac{\kappa}{\sqrt r}}$, where $\kappa = \kappa(\eta, \delta)$ is a constant that we will set later.
\smallskip

Suppose, for the sake of contradiction, that every two sets $A, B \subseteq V$ with $|A|, |B| \geq (\eta \delta /32)n$ satisfy $d(A,B) \leq \kappa/\sqrt{r}$, which implies that $d_G(A,B) \leq 1$.  We use the following lemma from ~\cite{NRS05}:

\begin{lemma}[Lemma 2.3 in \cite{NRS05}] \label{lem:vertex-expansion}
Fix $0<\eps \leq \frac{1}{10}$, and let  $G=(V,E)$ be a graph such that for every $X,Y\subseteq V$ satisfying $|X|, |Y| \geq \eps |V|$, $d_G(x,y)\leq 1$. Then there is a $U\subseteq V$ with $\abs{U} \geq (1-\eps)|V|$ with $h(G[U])\geq \frac{1}{2}$.
\end{lemma}

Invoking Lemma~\ref{lem:vertex-expansion} on $G$ yields a subset $X' \subseteq V$, with $|X'| \geq (1-\frac{\eta \delta}{32}) n$  such that $h(G[X'])\geq \frac{1}{2}$. We claim the following:
\begin{equation}\label{eq:x-spread}
\frac{1}{|X'|^2} \sum_{i,j\in X'} \norm{f(i)-f(j)}_2  \geq \frac{(\eta \delta)^{3/2}}{32}. 
\end{equation}

To see this, note that $\abs{X' \times X'} \geq (1-\frac{\eta\delta}{16}) n^2$.  Let $D = \inbrace{(i,j)\in V\times V \suchthat \norm{f(i)-f(j)}_2^2 \geq \eta \delta /4 }$. Since the diameter of the unit ball is $2$,  in order to satisfy \eqref{eq:proj-spread}, we should have $|D| \geq (\eta \delta/8)n^2$.  Thus, $|D \cap (X' \times X')| \geq \frac{\eta\delta}{16} n^2$. This implies that the average $\ell_2$-distance in $X' \times X'$ is at least:
\begin{equation}
\frac{1}{n^2} |D \cap (X' \times X')| \times \sqrt{\frac{\eta \delta}{4}} \geq \frac{(\eta \delta)^{3/2}}{32}.  
\end{equation}
This proves \eqref{eq:x-spread}.

\smallskip
We can now apply Proposition~\ref{prop:euclidean-ball-lemma} to $G[X']$, and the projections $\inbrace{f(i)}_{i\in V}$, with $\gamma = (\eta \delta)^{3/2}/{32}$.  We infer that there exists a path in $G$, of   $k \leq  c_2(\gamma) \sqrt{r} = a(\eta, \delta) \sqrt{r}$ vertices ${i_1, i_2, \ldots i_k} \subseteq X'$ such that $\norm{f(i_1) - f(i_k)}_2 \geq c_1(\gamma) = b(\eta, \delta)$, where $a(\eta, \delta)$ and $b(\eta, \delta)$ are constants depending on $\eta$ and $\delta$.

\smallskip

This implies that:
\begin{equation}
b^2(\eta, \delta) \, \overset{(a)}{\leq} \, \norm{f(i_1) - f(i_k)}_2^2 \, \overset{(b)}\leq \, \norm{x_{i{_1}} - x_{i_{k}}}_2^2 \, \overset{(c)}{\leq}\, \sum_{j=1}^{k-1} \norm{x_{i_j} - x_{i_{j+1}}}_2^2 \, \overset{(d)}{\leq}\, a(\eta, \delta) \sqrt{r}\frac{\kappa}{\sqrt{ r}}.
\end{equation}    

Above, $(b)$ follows from the fact that projections can only decrease distances, $(c)$ from the $\ell_2^2$ property, and $(d)$ from the definition of $G$. This is a contradiction, if we set $\kappa <\frac{b^2(\eta, \delta)}{a(\eta, \delta)}$.
\end{proof}

\begin{remark}
The last chain of inequalities above is the only place where the $\ell_2^2$ triangle inequalities are invoked. Without them, we could still prove a weaker statement with $O(1/r)$ separation between the large sets, since $(c)$ would hold with an additional multiplicative factor of $k$ by convexity.
\end{remark}

\subsection{The general case}\label{sec:general-case}
We now extend our argument to the general case.  Let us fix some notation before going to the proofs. We will take $V \coloneq [n]$,  and $X = \inbrace{x_1, \ldots, x_n}$ to satisfy the $\ell_2^2$ triangle inequalities, with $\ssr_\eta(X) = r$. Let $\Pi$ be the corresponding $r$-dimensional subspace. Let $f(i) \coloneq \Pi x_i$, as before.  Define
\[
d_f(i,j) \coloneq \norm{f(i)- f(j)}_2^2
\]
The terms $d_f(i,S)$,  $d_f(S, T)$ for $S, T \subseteq V$ are defined naturally, and denote $\diam_f(S) \eqdef \max_{i,j \in S} d_f(i,j)$. Note that $d_f(\cdot, \cdot)$ is \emph{not necessarily} a distance, unlike $d(\cdot, \cdot)$. However, since $f$ is a projection map,  it satisfies:
\begin{equation} \label{eq:proj-leq-real}
d(i,S) \geq d_f(i,S) \qquad \forall i\in V, \, \forall S\subseteq V,
\end{equation}

\noindent We will also assume that $X$ is scaled to satisfy:
\begin{equation}\label{eq:spread}
\frac{1}{n^2} \sum_{i,j \in V} \norm{x_i - x_j}_2^2 \,= \, 1
\end{equation}

\smallskip
\noindent We first record a simple observation.

\begin{observation}\label{obs:no-triang}
For any $i,j \in V$, and any $S\subseteq V$,
\[d_f(i,j) \leq 3 \, (d_f(i,S) +  \diam_f(S) + d_f(j,S)).\]
\end{observation}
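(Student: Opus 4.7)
The plan is to reduce the inequality to a purely Euclidean computation on the projected points $f(i),f(j)\in\R^d$, using the fact that $d_f(\cdot,\cdot)$ is a squared Euclidean distance (and hence behaves nicely under the genuine triangle inequality for $\|\cdot\|_2$, losing only constant factors upon squaring).

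First, I would pick $k_i,k_j\in S$ that witness the distances $d_f(i,S)$ and $d_f(j,S)$ respectively, i.e.\ $\|f(i)-f(k_i)\|_2^2=d_f(i,S)$ and $\|f(j)-f(k_j)\|_2^2=d_f(j,S)$. Then I would apply the ordinary (Euclidean) triangle inequality in three hops:
\[
\|f(i)-f(j)\|_2 \;\leq\; \|f(i)-f(k_i)\|_2 + \|f(k_i)-f(k_j)\|_2 + \|f(k_j)-f(j)\|_2 .
\]
Squaring and using the elementary inequality $(a+b+c)^2\leq 3(a^2+b^2+c^2)$ (an immediate consequence of Cauchy--Schwarz, or of the convexity of $t\mapsto t^2$) yields
\[
d_f(i,j) \;\leq\; 3\bigl(\|f(i)-f(k_i)\|_2^2 + \|f(k_i)-f(k_j)\|_2^2 + \|f(k_j)-f(j)\|_2^2\bigr).
\]
Finally, I would identify the three terms on the right: the first equals $d_f(i,S)$, the third equals $d_f(j,S)$, and the middle one equals $d_f(k_i,k_j)\leq \diam_f(S)$ since $k_i,k_j\in S$. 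Collecting gives the claimed bound.

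There is essentially no obstacle here; the only small subtlety worth flagging is that one cannot invoke a triangle inequality for $d_f$ itself (since $d_f$ is a squared distance and hence not a metric in general), which is precisely why the detour through the underlying Euclidean norm and the $(a+b+c)^2\leq 3(a^2+b^2+c^2)$ step are needed, and also why the constant $3$ (rather than $1$) is unavoidable with this argument.
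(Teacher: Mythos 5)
Your proof is correct and is essentially identical to the paper's: the paper also picks witnesses $i^*,j^*\in S$, applies the triangle inequality for $\sqrt{d_f}$ (the Euclidean distance of the projections), and then bounds the squared sum by $3$ times the sum of squares via convexity. No differences worth noting.
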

\begin{proof}
Let $i^*, j^* \in S $ be such that $d_f(i,S) = d_f(i, i^*)$ and $d_f(j,S)= d_f(j, j^*)$. Since $\sqrt{d_f}$ obeys the triangle inequality, we have:
\begin{align*}
\inparen{\sqrt{d_f(i,j)}}^2 &\leq  \inparen{\sqrt{d_f(i, i^*)}+ \sqrt{d_f(i^*, j^*)} + \sqrt{d_f(j, j^*)}}^2 \\
& \leq 3(d_f(i,S) +  \diam_f(S) + d_f(j,S)) 
\end{align*}
The last inequality follows from the convexity of the function $g(x) = x^2$, and the definition of $\diam_f$.
\end{proof}

\noindent We now consider various cases, and show that a low average-distortion embedding exists in each case.

\begin{lemma}[Dense Ball]\label{lem:dense-ball-lemma}
If $\exists i\in V$, with $|B(i, 1/12)| \geq n/12$, then we can find an $O(1)$-average distortion embedding of $X$ into $\ell_1$.
\end{lemma}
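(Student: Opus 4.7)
The natural plan is to apply the sufficient condition of Lemma~\ref{lem:suff-condition} directly with $S = B(i, 1/12)$, so the task reduces to showing $|S|\sum_{j\notin S} d(j,S) = \Omega(n^2)$. By hypothesis $|S| \geq n/12$, so it suffices to prove $\sum_{j\notin S} d(j,S) = \Omega(n)$.

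The first step is to use the $\ell_2^2$ triangle inequality together with the spread normalization \eqref{eq:spread} to lower bound $\sum_{j\in V} d(j,i)$. Writing $d(j,k)\leq d(j,i)+d(i,k)$ for all $j,k\in V$ and summing,
\begin{equation*}
n^2 \;=\; \sum_{j,k} d(j,k) \;\leq\; 2n\sum_{j\in V} d(j,i),
\end{equation*}
which gives $\sum_{j\in V} d(j,i) \geq n/2$. Since every $j\in S$ contributes at most $1/12$ to this sum, we get $\sum_{j\notin S} d(j,i) \geq n/2 - |S|/12 \geq n/2 - n/12 = 5n/12$.

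The second step is to pass from $d(j,i)$ to $d(j,S)$. For any $j\notin S$ and any $k\in S$, the $\ell_2^2$ triangle inequality gives $d(j,k) \geq d(j,i) - d(i,k) \geq d(j,i) - 1/12$, and minimizing over $k\in S$ yields $d(j,S) \geq d(j,i) - 1/12$. Summing over $j\notin S$ and using $|V\setminus S|\leq n$,
\begin{equation*}
\sum_{j\notin S} d(j,S) \;\geq\; \sum_{j\notin S} d(j,i) \;-\; \frac{|V\setminus S|}{12} \;\geq\; \frac{5n}{12} - \frac{n}{12} \;=\; \frac{n}{3}.
\end{equation*}
Combining with $|S|\geq n/12$ yields $|S|\sum_{j\notin S} d(j,S) \geq n^2/36$, so by Lemma~\ref{lem:suff-condition} the \Frechet embedding $j\mapsto d(j,S)$ has average distortion at most $36 = O(1)$.

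I do not expect a real obstacle here, since the dense-ball hypothesis and the $\ell_2^2$ triangle inequalities are strong enough to convert the global spread into a linear lower bound on $\sum_{j\notin S} d(j,S)$ through purely elementary manipulations; the one point deserving mild care is the sign of $d(j,i)-1/12$ for $j\notin S$, which is automatically nonnegative by the definition of $S$.
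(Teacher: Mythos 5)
Your proposal is correct and follows essentially the same route as the paper: take $S=B(i,1/12)$, use the $\ell_2^2$ triangle inequality together with the normalization \eqref{eq:spread} to show $\sum_{j\notin S} d(j,S)=\Omega(n)$, and invoke Lemma~\ref{lem:suff-condition}. Your bookkeeping through the center point $i$ (with the reverse triangle inequality $d(j,S)\geq d(j,i)-1/12$) is in fact slightly more careful than the paper's one-line inequality $d(i,j)\leq d(i,S)+d(j,S)$, which silently absorbs the diameter of $S$ into the constants.
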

\begin{proof}
The proof follows the proof of a similar lemma in \cite{AroraRV2009}. Let $i_0 \in V$ be such that $|B(i_0, 1/12)| \geq n/12$, and let $S = B(i_0, 1/12)$. Consider the embedding $ i \mapsto d(i,S)$. This is a contraction. Since $\sum_{ij} \norm{x_i - x_j}_2^2 = n^2$, we have : 
\begin{align*}
n^2 &= \sum_{i,j \in V} d(i,j) \\
& \leq \sum_{i,j \in V} \inparen{d(i,S)+ d(j,S)}  \qquad \ldots \text{ Using $\ell_2^2$ triangle inequality }\\
& = 2n \inparen{ \sum_{i \notin S} d(i,S)} 
\end{align*}

This gives us that $\sum_{i \notin S} d(i,S) \geq n/12$. Since $|S| = \Omega(n)$, Lemma~\ref{lem:suff-condition} applies, and proves that the above embedding has $O(1)$ average-distortion.  \footnote{Strictly speaking, one could do without the $\ell_2^2$ triangle inequality here by adjusting the constants appropriately, as we did in Observation~\ref{obs:no-triang}.}
\end{proof}

\begin{lemma}[Isolating a bounded ball] \label{lem:bounded-ball}
If there is no $i\in V$ such that $|B(i,1/12)| \geq n/12$, then there is a $j \in V$ such that $S=B(j, 12/9)$ satisfies  $|S| \geq \frac{3}{12}n$, and
 \[
 \sum_{i,j \in S} d(i,j) \geq \inparen{\frac{2}{12}} \inparen{\frac{1}{12}} \frac{n^2}{12}
 \]
\end{lemma}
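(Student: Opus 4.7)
The plan is a pure averaging-plus-Markov argument in two stages, using the spread normalization~\eqref{eq:spread} together with the standing (negated) hypothesis of Lemma~\ref{lem:dense-ball-lemma}, namely that $|B(i,1/12)| < n/12$ for every $i \in V$.

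To pick the center, I would first use pigeonhole on~\eqref{eq:spread}: since $\sum_{i,k \in V} d(i,k) = n^2$, some $j^\star \in V$ satisfies $\sum_{i \in V} d(i, j^\star) \leq n$. Applying Markov's inequality to the values $\{d(i,j^\star)\}_{i \in V}$ then yields
\[
\bigl|\{\,i \in V : d(i, j^\star) > 12/9\,\}\bigr| \;\leq\; \frac{\sum_{i} d(i, j^\star)}{12/9} \;\leq\; \frac{9n}{12},
\]
so that $S \coloneq B(j^\star, 12/9)$ satisfies $|S| \geq n - 9n/12 = 3n/12$, which is exactly the size condition claimed by the lemma.

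For the internal mass $\sum_{i,k \in S} d(i,k)$, I would invoke the standing hypothesis at every $i \in S$: since $|B(i, 1/12)| < n/12$, at most $n/12$ elements of $S$ lie within distance $1/12$ of $i$, so
\[
\bigl|\{\,k \in S : d(i,k) \geq 1/12\,\}\bigr| \;\geq\; |S| - n/12 \;\geq\; 2n/12,
\]
and therefore $\sum_{k \in S} d(i,k) \geq (2n/12)(1/12)$. Summing this bound over the at least $3n/12$ choices of $i \in S$ gives
\[
\sum_{i,k \in S} d(i,k) \;\geq\; |S|\cdot \frac{2n}{12}\cdot\frac{1}{12} \;\geq\; \frac{3n}{12}\cdot\frac{2n}{12}\cdot\frac{1}{12},
\]
which is (in fact three times) the stated lower bound $\bigl(\tfrac{2}{12}\bigr)\bigl(\tfrac{1}{12}\bigr)\bigl(\tfrac{n^2}{12}\bigr)$.

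There is no genuine obstacle: both the size of $S$ and the internal mass inside $S$ come out of a single first-moment bound. The only observation worth making is that the non-dense-ball hypothesis is uniform in the center, so it applies at every $i$ inside the ball $S$ just picked out by Markov; this is precisely what converts ``$S$ is large'' into ``$S$ has large internal sum of distances''. Note also that the $\ell_2^2$ triangle inequality is not needed anywhere in the argument, consistent with the footnote in Lemma~\ref{lem:dense-ball-lemma}.
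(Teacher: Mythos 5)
Your proof is correct and follows essentially the same first-moment argument as the paper: an averaging/Markov step to produce the large ball $S=B(j^\star,12/9)$, followed by a pointwise application of the non-dense-ball hypothesis to lower-bound the internal sum. The only (harmless) difference is that the paper first excises $B(j_0,1/12)$ from $S$ and counts pairs in the annulus $A=S\setminus B(j_0,1/12)$, whereas you count over all of $S$ directly, which is slightly simpler and yields a constant three times better than stated.
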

\begin{proof}
Suppose we had $|B(j, 12/9)| < (3n/12)$ for every $j\in V$. Then, for any $j\in V$, we would have $|\overline{B(j, 12/9)}| > 9n/12$, which  gives us that $\sum_i d(j, i) > n$. Summing over $j \in V$ contradicts \eqref{eq:spread}.

\smallskip
\noindent Now, let $j_0 \coloneq \arg \max_{j\in V} |B(j,12/9)|$, and $S \coloneq B(j_0 ,12/9)$.
 Define the set $A= B(j_0,12/9) \setminus B(j_0, 1/12)$. From our assumption and the preceeding argument, $|A| \geq 2n/12$. Since $|B(i,1/12)| \leq n/12$ for every $i \in A$, we have that $\abs{\overline{B(i,1/12)} \cap A} \geq n/12$. This gives us:
 
$$ \sum_{i\in A,  j\in A} d(i,j) \geq \frac{2n}{12} \times \frac{1}{12} \times \frac{n}{12} $$
\end{proof}

\medskip

In next two lemmas, assume that the precondition of Lemma~\ref{lem:bounded-ball} holds, i.e., there is no $i\in V$ with   $|B(i,1/12)| \geq n/12$.

\begin{lemma} \label{prop:ball-projection-good}
Let $j_0 = \arg \max_{j\in V} |B(j, 12/9)|$, and $S \eqdef B(j_0, 12/9)$. If $S$ satisfies:
\[
\sum_{i,j \in S} d_f(i,j) \geq \frac{\eta}{600} |S|^2,
\]
then there is an embedding of $X$ into $\ell_1$ with $O(\sqrt{r})$ average distortion.
\end{lemma}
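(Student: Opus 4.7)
My plan is to reduce the lemma to the constant-spread case (Theorem~\ref{thm:const-spread}) by applying that theorem to the restricted point set $\{x_i\}_{i \in S}$, suitably shifted and rescaled to fit into the unit Euclidean ball. Specifically, I would set $y_i \eqdef (x_i - x_{j_0})/\sqrt{12/9}$ for $i \in S$. Since $d(i, j_0) \leq 12/9$ on $S$, we have $\{y_i\}_{i \in S} \subseteq B_2^d$, and since affine maps preserve $\ell_2^2$ triangle inequalities on any subset, $\{y_i\}_{i \in S}$ is still an $\ell_2^2$ point set inside $B_2^d$.

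The key verification is that the hypotheses of Theorem~\ref{thm:const-spread} hold for $\{y_i\}_{i \in S}$. By the $\ell_2^2$ triangle inequality, $d(i,j) \leq d(i,j_0) + d(j_0,j) \leq 24/9$ for all $i,j \in S$, so $\sum_{i,j \in S} d(i,j) \leq (24/9)|S|^2$. Combined with the hypothesis $\sum_{i,j \in S} d_f(i,j) \geq (\eta/600)|S|^2$, and the fact that $\Pi$ is a linear projection (so that the ratio $\sum \|\Pi(y_i-y_j)\|_2^2 / \sum \|y_i-y_j\|_2^2$ depends only on pairwise differences, is invariant under translation, and transforms homogeneously under the uniform rescaling), the same rank-$r$ projector $\Pi$ witnesses $\ssr_{\eta'}(\{y_i\}_{i \in S}) \leq r$ with $\eta' = \Omega(\eta)$. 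An entirely analogous estimate shows that the spread of $\{y_i\}_{i \in S}$ is at least $\delta' = \Omega(\eta)$.

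Next I would invoke Theorem~\ref{thm:const-spread} on $\{y_i\}_{i \in S}$ to obtain subsets $A, B \subseteq S$ with $|A|, |B| \geq (\eta' \delta' / 32) |S| = \Omega_\eta(|S|) = \Omega_\eta(n)$ (using $|S| \geq 3n/12$ from Lemma~\ref{lem:bounded-ball}), that are $\Omega(1/\sqrt{r})$-separated in the squared Euclidean metric of the $y_i$. Undoing the rescaling multiplies all squared distances by the constant factor $12/9$, so $d(A, B) = \Omega(1/\sqrt{r})$ in the original metric. Lemma~\ref{lem:suff-condition} applied to $A$ then gives $|A| \sum_{i \notin A} d(i, A) \geq |A|\,|B|\,d(A,B) = \Omega_\eta(n^2/\sqrt{r})$, which, together with the normalization $\sum_{i,j} d(i,j) = n^2$ from \eqref{eq:spread}, yields a one-dimensional \Frechet embedding into $\ell_1$ of average distortion $O_\eta(\sqrt{r})$, as required.

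The one delicate step is verifying that after restricting to $S$, the same rank-$r$ subspace $\Pi$ still captures a constant fraction of the pairwise squared distances of $\{y_i\}_{i \in S}$; this crucially uses the $\ell_2^2$ diameter bound $24/9$ on $S$ to convert the hypothesis on $\sum_{i,j \in S} d_f(i,j)$ into a lower bound on the subspace-rank \emph{ratio}, which is exactly what Definition~\ref{def:subspace-rank} requires. Beyond this bookkeeping, the argument is a direct reduction to the constant-spread machinery already in place, and no further analytical tools should be needed.
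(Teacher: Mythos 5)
Your proposal is correct and follows essentially the same route as the paper: rescale (and translate) the points of $S$ into the unit ball, check that both the spread and the projected spread on $S$ are $\Omega_\eta(1)$ so that Theorem~\ref{thm:const-spread} applies, and then feed the resulting $\Omega(1/\sqrt{r})$-separated large sets into Lemma~\ref{lem:suff-condition}. Your explicit translation by $x_{j_0}$ and the explicit ratio computation certifying $\ssr_{\eta'}$ of the restricted set are just cleaner bookkeeping of steps the paper leaves implicit.
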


\begin{proof}
Consider the map $g:V \rightarrow \R^d$ given by $g(i) \eqdef \sqrt{9/12}\cdot x_i$. This ensures that $g(i) \in B_2^d$ for every $i\in S$, and the mapping continues to obey the $\ell_2^2$ triangle inequalities. Furthermore, from Lemma~\ref{lem:bounded-ball}, the points in $S$ satisfy:
\begin{equation}\label{eq:s-spread-proj}
\frac{1}{|S|^2} \sum_{i,j \in S } \norm{g(i)-g(j)}_2^2 \geq \frac{9}{12} \times \frac{2}{12^3} = \Omega(1)
\end{equation}

\noindent From the assumption on $S$, we infer that:
\[
\frac{1}{|S|^2} \sum_{i,j \in S } \norm{\Pi g(i) - \Pi g(j)}_2^2 \,\geq  \, \frac{9}{12} \times \frac{\eta}{600}
\]

\noindent We can now invoke Theorem~\ref{thm:const-spread} on just the points in $S$ to conclude that there exist sets $A, B \subseteq S$, such that $|A|, |B| \geq \Omega_\eta(n)$ with $d(A,B) \geq \Omega_\eta(1/\sqrt{r})$ (the scaling by a constant factor just shrinks some distances). As before, it is easy to see that $A$ satisfies the conditions of Lemma~\ref{lem:suff-condition} with $c= \Omega(1/\sqrt{r})$ and hence the mapping $h(i)\eqdef d(i,A)$ has average distortion $O(\sqrt{r})$. Note that by the ARV algorithm~\cite{AroraRV2009}, the sets can be found with good probability by a random separating hyperplane through $j_0$.
\end{proof}

\medskip 
\begin{lemma}\label{prop:ball-projection-bad}
Let $j_0 = \arg \max_{j\in V} |B(j, 12/9)|$, and $S \eqdef B(j_0, 12/9)$. If $S$ satisfies:
\[
\sum_{ij \in S} d_f(i,j) \leq \frac{\eta}{600} |S|^2 ,
\]
then we can find an embedding of $X$ into $\ell_1$ with $O(1)$ average distortion.
\end{lemma}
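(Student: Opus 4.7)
The plan is to use a \Frechet embedding to a carefully chosen subset $S'\subseteq S$ whose projections are tightly concentrated around the centroid $\mu := \frac{1}{|S|}\sum_{i\in S} f(i)$ of $f(S)$. The hypothesis $\sum_{i,j\in S} d_f(i,j)\le (\eta/600)|S|^2$, combined with the standard identity $\sum_{i,j\in S}\|f(i)-f(j)\|_2^2 = 2|S|\sum_{i\in S}\|f(i)-\mu\|_2^2$, gives $\sum_{i\in S}\|f(i)-\mu\|_2^2 \le \eta|S|/1200$. A single application of Markov's inequality then produces a subset $S'\subseteq S$ with $|S'|\ge |S|/2 = \Omega(n)$ on which $\|f(j)-\mu\|_2^2 \le \eta/600$ for every $j\in S'$.

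Next, I would use the subspace-rank hypothesis to show that a constant fraction of $V$ have projections far from $\mu$. Summing the pointwise bound $\|f(i)-f(j)\|_2^2 \le 2\|f(i)-\mu\|_2^2 + 2\|f(j)-\mu\|_2^2$ over all pairs and using $\sum_{i,j} d_f(i,j) \ge \eta n^2$ yields $\sum_{i\in V}\|f(i)-\mu\|_2^2 \ge \eta n/4$. After subtracting the negligible contribution of $S$ already controlled above, the set $T := \{i\in V:\|f(i)-\mu\|_2^2 \ge 4\eta/600\}$ captures $\Omega(\eta n)$ of this total mass, and by construction $T$ is disjoint from $S'$.

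Finally, consider the \Frechet embedding $h(i) := d(i,S')$. For every $i\in T$ and $j\in S'$ the Euclidean triangle inequality gives $\|f(i)-f(j)\|_2 \ge \|f(i)-\mu\|_2 - \|f(j)-\mu\|_2 \ge \tfrac{1}{2}\|f(i)-\mu\|_2$, hence $d_f(i,S') \ge \|f(i)-\mu\|_2^2/4$; combining with the contractivity $d(i,j) \ge d_f(i,j)$ of $\Pi$, one obtains $\sum_{i\notin S'} d(i,S') \ge \Omega(\eta n)$. With $|S'|=\Omega(n)$, Lemma~\ref{lem:suff-condition} then certifies that $h$ has average distortion $O(1/\eta)=O(1)$ since $\eta=\Omega(1)$. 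The main conceptual obstacle is the opening step: one has to recognise that the smallness of $\sum_{i,j\in S}d_f(i,j)$ should be read as concentration of $f(S)$ about its centroid, which is what permits replacing the (failing) $\ell_2^2$ triangle inequality in the projected space by the genuine Euclidean triangle inequality in the subsequent arguments.
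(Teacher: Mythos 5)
Your proof is correct. It follows the same high-level strategy as the paper's argument --- isolate a large, tightly clustered subset of $S$ in the projected space and take the \Frechet embedding to it, verifying the hypothesis of Lemma~\ref{lem:suff-condition} via $d(i,\cdot)\ge d_f(i,\cdot)$ --- but the technical execution differs at both key steps. The paper finds a single center $i_0\in S$ by averaging over the precondition and takes $T=B_f(i_0,\eta/24)$, then lower-bounds $\sum_i d_f(i,T)$ using the relaxed triangle inequality for $d_f$ (Observation~\ref{obs:no-triang}, with its factor of $3$) together with $\diam_f(T)\le 4\eta/24$. You instead work with the centroid $\mu$ of $f(S)$, using the exact variance identity $\sum_{i,j\in S}\norm{f(i)-f(j)}_2^2=2|S|\sum_{i\in S}\norm{f(i)-\mu}_2^2$ plus Markov to extract $S'$, and then lower-bound $d_f(i,S')$ for the far points via the genuine reverse Euclidean triangle inequality applied to $\norm{f(\cdot)-\mu}_2$. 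Your route is arguably cleaner: it avoids Observation~\ref{obs:no-triang} entirely and isolates precisely the point that $d_f$ fails the triangle inequality only up to squaring. The constants all check out: $|S'|\ge|S|/2\ge n/8$ by Lemma~\ref{lem:bounded-ball}; the threshold $4\eta/600$ defining $T$ is exactly what makes $\norm{f(j)-\mu}_2\le\frac{1}{2}\norm{f(i)-\mu}_2$ for $j\in S'$, $i\in T$; and the disjointness $T\cap S'=\emptyset$, which you correctly note, is what lets you restrict the sum in Lemma~\ref{lem:suff-condition} to $i\in T$.
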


\begin{proof}
The proof will be similar to the proof of Lemma~\ref{lem:dense-ball-lemma}, except for the fact that we will work with projections instead of the original vectors.
\smallskip
First, observe that there exists an $i_0 \in S$ such that $|B_f(i_0, \eta/24) \cap S| \geq 24|S|/25$. If not, then for every $i\in S$, we will have $\sum_{j \in S} d_f(i,j) > \frac{1}{25}|S| \times \eta/24 = \eta|S|/600$. Summing over $j \in S$ results in a contradiction to the precondition on $S$.  

\smallskip
\noindent Let $T \eqdef  B_f(i_0, \eta/24)$; from the preceding argument, we have $|T| = \Omega(n)$.
\smallskip

\begin{claim} \label{cl:T-dense-ball}
$
\sum_{j\notin T } d_f(j, T) \geq \eta n/12$
\end{claim}
\begin{proof}
We know that $\sum_{i,j \in V} \norm{f(i) - f(j)}_2^2 = \sum_{i,j \in V} d_f(i,j) \geq \eta n^2$. Using Observation~\ref{obs:no-triang}, we can infer:
\begin{align*}
\eta n^2 &\leq  \sum_{i,j \in V} d_f(i,j) \\
& \leq 3~\sum_{i,j \in V} \inparen{d_f(i,T) + \diam_f(T) + d_f(j,T)} \qquad \ldots \text{Using Observation~\ref{obs:no-triang}}\\
&= 3~\inparen{2n~\sum_{i\in V} d_f(i,T) \,+\, \frac{4\eta}{24}n^2} \qquad \ldots \text{ Since } \diam_f(T) \leq \frac{4\eta}{24} \\
\end{align*}
This yields that $\sum_i d_f(i,T) \geq \frac{\eta}{12} n$, proving the claim. 
\end{proof}

\smallskip

Since $|T| = \Omega(n)$, and $d(i,T) \geq d_f(i,T)$, $T$ satisfies the conditions of Lemma~\ref{lem:suff-condition}. This gives us an $O(1)$ average-distortion embedding of the points into $\ell_1$.
\end{proof}

We can now infer the proof of Theorem~\ref{thm:main-thm} by using the results above.

\begin{proof}[Proof of Theorem~\ref{thm:main-thm}]
The conditions covered in Lemmas~\ref{lem:dense-ball-lemma},~\ref{lem:bounded-ball},~\ref{prop:ball-projection-good} and \ref{prop:ball-projection-bad} on the set of points $\{x_i\}_{i \in V}$ are exhaustive, and in each case yield an embedding with $O(\sqrt{r})$ average distortion. It is clear that each of these conditions can be easily checked, and the corresponding embeddings can be constructed efficiently. 
\end{proof}

\begin{remark}
The Hamming Cube on $N$ points, residing in $\log N$ dimensions, and having $\eta$-subspace rank $\Omega_\eta(\log N)$ by symmetry, has two $\Omega(N)$-sized sets that are $\Omega(1/\sqrt{ \log N})$ apart, and shows that the above analysis is tight up to constants.
\end{remark}

\subsection{Application to Sparsest Cut}~\label{sec:sparsestcut}
\noindent The proof of Corollary~\ref{cor:main-cor} now follows easily, using the main result.

\begin{proof}[Proof of {Corollary~\ref{cor:main-cor}}]
\smallskip
Suppose $\lambda_r/n \geq \Phi_{SDP}/(1-\epsilon)$. We invoke the following result of Guruswami and Sinop~\cite{GuruswamiS2013} (stated here for the special case of \emph{Uniform} Sparsest Cut): 

\begin{proposition}[Von-Neumann inequality~{\cite[Theorem~3.3]{GuruswamiS2013}}]\label{prop:laplacian}
Let $\sigma_{1} \geq \sigma_{2} \geq \dotsc \geq \sigma_{n} \geq 0$ be the singular values of the matrix $M$ with columns $\{(x_{i} - x_{j})\}_{i<j}$. Then
$$
\frac{\sum_{t \geq r} \sigma_{j}^2}{\sum_{t=1}^{n} \sigma_{j}^2} \leq \frac{\Phi_{SDP}}{\lambda_{r}(G)/n}.
$$
\end{proposition}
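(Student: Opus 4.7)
The strategy is to invoke the rearrangement form of the Von-Neumann trace inequality: for two symmetric positive semidefinite matrices $A$ and $B$ with eigenvalues $\alpha_1 \geq \dotsc \geq \alpha_n$ and $\beta_1 \leq \dotsc \leq \beta_n$ (the latter sorted in \emph{opposite} order), one has $\tr(AB) \geq \sum_{t=1}^{n} \alpha_t \beta_t$. Applied with $A$ a Gram matrix built from $\{x_i\}$ and $B$ the graph Laplacian, this yields a lower bound on the SDP objective in terms of the tail of the singular values of $M$ and the small eigenvalues of $L(G)$, which is exactly the inequality to be proved.

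First I would express both sides of the target inequality as traces. For $M$ with columns $\{x_i - x_j\}_{i<j}$, the identity $MM^{T} = \sum_{i<j}(x_i-x_j)(x_i-x_j)^{T} = X(nI - \mathbf{1}\mathbf{1}^{T})X^{T} = n YY^{T}$ (with $Y$ the centered version of $X$, $y_i = x_i - \bar{x}$) immediately gives $\sigma_t(M)^2 = n \cdot \sigma_t(Y)^2$ for every $t$. Consequently, the SDP normalization $\sum_{k,l}\|x_k - x_l\|_2^2 = n^2$ pins down $\sum_{t}\sigma_t(M)^2 = \tfrac{1}{2}\sum_{k,l}\|x_k-x_l\|_2^2 = n^2/2$. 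For the SDP objective, I would use $L_c \mathbf{1} = 0$ to re-express it as $\sum_{i<j} c_{ij}\|x_i-x_j\|_2^2 = \tr(L_c \, Y^{T}Y)$, where $L_c$ is the normalized Laplacian (which under the paper's convention, for regular $G$, has the degree already absorbed into the weights).

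Second I would apply the trace inequality to $A = Y^{T}Y$ (whose eigenvalues are $\sigma_t(Y)^2$ in decreasing order) and $B = L_c$ (whose eigenvalues are $0 = \lambda_1 \leq \lambda_2 \leq \dotsc \leq \lambda_n$, in increasing order). This produces
\[
\tr\bigl(L_c Y^{T} Y\bigr) \;\geq\; \sum_{t=1}^{n}\sigma_t(Y)^2 \lambda_t \;\geq\; \lambda_r(G)\sum_{t \geq r}\sigma_t(Y)^2,
\]
where the last step uses $\lambda_t \geq \lambda_r$ for $t \geq r$.

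Finally, I would combine these: the left-hand side equals (up to the normalization constant) $n^2 \Phi_{SDP}$ from the SDP, and the right-hand side, after substituting $\sigma_t(Y)^2 = \sigma_t(M)^2/n$, becomes $(\lambda_r(G)/n) \sum_{t \geq r}\sigma_t(M)^2$. Rearranging and dividing by $\sum_t \sigma_t(M)^2 = n^2/2$ yields the stated inequality $\sum_{t \geq r}\sigma_t^2 / \sum_t \sigma_t^2 \leq \Phi_{SDP}/(\lambda_r(G)/n)$. The core inequality is a one-line consequence of Hoffman--Wielandt / Von-Neumann; the main obstacle is purely bookkeeping, namely tracking the factors introduced by the $i<j$ vs.\ $i,j$ summation conventions, the normalized-vs-combinatorial Laplacian, and the regularity assumption, so that the constants reconcile exactly to $\Phi_{SDP}/(\lambda_r/n)$ rather than to that expression multiplied by the degree.
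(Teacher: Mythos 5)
The paper does not prove this proposition at all --- it is imported verbatim as Theorem~3.3 of Guruswami--Sinop \cite{GuruswamiS2013}, so there is no in-paper argument to compare against. Your proof is essentially the standard one (and, as far as I can tell, the one in the cited source): write $MM^{T}=X(nI-\mathbf{1}\mathbf{1}^{T})X^{T}=nYY^{T}$, express the SDP objective as a trace against the Laplacian, and apply the reverse-order form of the von Neumann/Ruhe trace inequality $\operatorname{Tr}(AB)\geq\sum_{t}\lambda_{t}^{\downarrow}(A)\lambda_{t}^{\uparrow}(B)$ for PSD $A,B$, then drop the first $r-1$ terms and use $\lambda_{t}\geq\lambda_{r}$ for $t\geq r$. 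All of these steps check out, including the identities $\sigma_{t}(M)^{2}=n\,\sigma_{t}(Y)^{2}$ and $\sum_{t}\sigma_{t}(M)^{2}=n^{2}/2$ under the SDP normalization. The one point you correctly flag but do not fully discharge is the degree factor: with the paper's normalized Laplacian one gets $\operatorname{Tr}(L\,Y^{T}Y)=\tfrac{1}{c}\sum_{i<j}c_{ij}\|x_{i}-x_{j}\|_{2}^{2}$ for a $c$-regular graph, so the clean constant $\Phi_{SDP}/(\lambda_{r}/n)$ requires the regularity/normalization convention under which the degree is $1$ (as the paper indeed arranges in Corollary~\ref{cor:sp-cut-cor}); it would be worth stating that reduction explicitly rather than deferring it to ``bookkeeping.'' With that caveat settled, the proof is complete and correct.
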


For every $l \leq n$, we know that $\sum_{i=1}^{l} \sigma_i^2 = \sum_{i<j} \norm{\Pi_{l} (x_i - x_j)}_2^2$, where $\Pi_{l}$ is the subspace defined by the the top $l$ left singular vectors of $M$. This immediately gives us that $\ssr_{\epsilon}(X) = r-1$. Applying the main theorem gives us an $O(\sqrt{r})$ average distortion embedding into $\ell_1$, and hence an $O_\epsilon(\sqrt{r})$ approximation to $\Phi(G)$ in this setting.
\end{proof}

\begin{remark}
Under the same precondition, Guruswami and Sinop~\cite{GuruswamiS2013} give an $O(1/\epsilon)$ approximation, but by solving a SDP of size $n^{O(r)}$, using a partial solver that runs in time $2^{O(r)}\mathrm{poly}(n)$~\cite{GuruswamiS2012b}. They need to know $r$ first, and set up the SDP and solver appropriately. The works \cite{DeshpandeV2014, DHV16} give a $O(r/\epsilon^2)$ and $O(r/\epsilon)$ approximation respectively, using  just the Goemans-Linial SDP; the rounding algorithms do not depend on $r$. Our algorithm too is independent of $r$, and we get a better guarantee of $O(\sqrt{r}/\epsilon)$ in this setting. 
\end{remark}

Though the precondition of the corollary may seem involved, it can easily be related back to a simpler one, as the following corollary shows (proof in Appendix~\ref{sec:app-proofs}).

\begin{corollary}\label{cor:sp-cut-cor}
If $G$ is regular with $\lambda_r(G) \geq \epsilon$, then we can find a $O(\sqrt{r} \,+\, 1/\sqrt{\epsilon})$ approximation to the sparsest cut in $G$ in $\mathrm{poly}(n)$ time. 
\end{corollary}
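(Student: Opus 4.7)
The plan is a case analysis based on comparing $\Phi_{SDP}(G)$ to $\lambda_r(G)/n$, after solving SDP-1 to obtain $\Phi_{SDP}(G)$ (which simultaneously gives us the lower bound $\Phi^*(G) \geq \Phi_{SDP}(G)$).

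In the first case, when $\Phi_{SDP}(G) \leq \lambda_r(G)/(2n)$, the precondition of Corollary~\ref{cor:main-cor} is satisfied with parameter $\epsilon_0 = 1/2$ (via Proposition~\ref{prop:laplacian}, since then $\sum_{t \geq r}\sigma_t^2/\sum_t \sigma_t^2 \leq 1/2$, giving subspace rank $r-1$ with $\eta \geq 1/2$). Invoking Corollary~\ref{cor:main-cor} with this constant parameter yields an $O(\sqrt{r})$ approximation to the sparsest cut, with an absolute constant in front.

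In the complementary case, $\Phi_{SDP}(G) > \lambda_r(G)/(2n) \geq \epsilon/(2n)$, which immediately yields the lower bound $\Phi^*(G) \geq \Phi_{SDP}(G) > \epsilon/(2n)$ on the optimum. In this regime, I would fall back to a classical spectral algorithm: compute the second eigenvector $v_2$ of the normalized Laplacian and return its best sweep cut $S_C$. Cheeger's inequality bounds $\Phi(S_C)$ from above in terms of $\sqrt{\lambda_2}$, and also lower-bounds $\Phi^*(G)$ in terms of $\lambda_2$ (via the spectral relaxation of sparsity for regular graphs). Combining these Cheeger bounds with the case-specific lower bound $\Phi^*(G) > \epsilon/(2n)$ should let us eliminate the dependence on $\lambda_2$ and conclude that $\Phi(S_C)/\Phi^*(G) = O(1/\sqrt{\epsilon})$.

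Returning the better of the two cuts across the cases produces a cut that achieves the claimed $O(\sqrt{r} + 1/\sqrt{\epsilon})$ approximation in polynomial time, as SDP-1 is solvable to the needed accuracy in polynomial time, the spectral sweep is efficient, and Corollary~\ref{cor:main-cor}'s rounding (as discussed in the proof of Theorem~\ref{thm:main-thm} and Lemma~\ref{prop:ball-projection-good}) is also efficient. The main obstacle is verifying the case B calculation cleanly: one must combine the Cheeger upper bound on $\Phi(S_C)$, the Cheeger/spectral lower bound on $\Phi^*(G)$ in terms of $\lambda_2$, and the SDP-derived lower bound $\Phi^*(G) > \epsilon/(2n)$ in the right way, while being careful about the standard sparsity-vs-conductance normalization for regular graphs so that the final ratio comes out to $O(1/\sqrt{\epsilon})$ without extraneous factors.
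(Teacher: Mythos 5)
Your proposal is correct and matches the paper's proof in structure: the paper also splits into two cases according to whether $\Phi_{SDP}$ is small relative to $\epsilon/n$, invokes Corollary~\ref{cor:main-cor} (with a constant parameter) in one case and a Cheeger-type sweep rounding in the other, and returns the better of the two cuts. The only immaterial difference is in the second case, where the paper sweeps a single coordinate of the SDP solution and compares the resulting cut directly to $\Phi_{SDP}$, whereas you sweep the Fiedler vector and combine both sides of Cheeger's inequality with the case-specific bound $n\Phi_{SDP} > \epsilon/2$ (via $n\,\Phi^{*} \geq \max\inbrace{\lambda_2/2,\ \epsilon/2} \geq \sqrt{\lambda_2\epsilon}/2$, which does yield the claimed $O(1/\sqrt{\epsilon})$ ratio after accounting for the factor-of-$n$ normalization between sparsity and conductance for regular graphs).
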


\begin{remark}
It is clear that we get a $O(\sqrt{r})$ approximation for all graphs whose $\ell_2^2$ representation always has subspace rank $r$. Graphs of low threshold-rank are one class of graphs that have this property.
\end{remark}

\subsubsection*{Acknowledgements}
The second named author would like to thank Amit Deshpande and Prahladh Harsha for prior useful discussions.


\bibliographystyle{plainurl}
\bibliography{lde-bib}

\appendix
\section{Appendix}

\subsection{Ruling out a  worst-case distortion bound of $O(\sqrt{\ssr_\eta(X)})$.}\label{sec:no-worst-case}

We give a simple example of why one cannot hope to prove a worst-case distortion bound like Goemans' result, using the notion of subspace rank. Suppose that a certain point set $\emph{X}$ satisfies the $\ell_2^2$ inequalities, and has \emph{worst-case} distortion $\Omega(D)$ for embedding into $\ell_1$. It is known that there exists such an $X$ with $D = \Omega(\sqrt{\log n})$~\cite{NaorY2017}. Without loss of generality, let $X$ be scaled to satisfy $\sum_{i,j} \norm{x_i - x_j}_2^2 = n^2$, and $\norm{x_1 - x_2}_2^2 = \max_{i,j} \norm{x_i - x_j}_2^2 $. Consider the set $Y$ which has $X$, along with $C-1$ additional copies of $x_1$ and $x_2$\footnote{Technically, we are dealing with semi-metrics, and hence distinct points may overlap.}. Clearly, $Y$ satisfies the $\ell_2^2$ triangle inequalities. Further, $Y$ has $\eta$-subspace rank of $1$ for a large enough $C$: the sum of all squared distances is at most $C + (C^2-C) \norm{x_1 - x_2}_2^2$, and the sum of squared distances along the direction $x_1- x_2$ is  at least $C^2 \norm{x_1 - x_2}_2^2$.  However, embedding $Y$ with \emph{worst}-case distortion $O(1)$ into $\ell_1$ would contradict the lower bound on embedding $X$ into $\ell_1$.

\subsection{Proof of Corollary~\ref{cor:sp-cut-cor}} \label{sec:app-proofs}

\begin{proof}[Proof (Of Corollary~\ref{cor:sp-cut-cor})]
The proof follows by using a combination of two algorithms, depending on how $\lambda_r$ compares to $\Phi_{SDP}(G)$.
\smallskip
Suppose that $G$ is $1$-regular by scaling the edge weights, without loss of generality, and let $X=\inbrace{x_1, \ldots, x_n}$ be the optimal SDP solution. If $\Phi_{SDP} \geq \epsilon/100n$, then there is one co-ordinate of the SDP solution with objective value at least $\epsilon /100n$.
In this case, running the Cheeger rounding algorithm ~\cite[Lemma 2.1]{AlonM1985} (see also ~\cite[Section 2.4]{Trevisan2011} for an exposition) on this co-ordinate would output a cut of sparsity $O(\sqrt{\epsilon}/n) \leq O\inparen{\Phi_{SDP}(G)/\sqrt{\epsilon}}$.

\smallskip
If $\Phi_{SDP} \leq \epsilon /100 n$ then we have $\lambda_r/n \geq 100 \Phi_{SDP}$. Applying Corollary~\ref{cor:main-cor} with $\epsilon = 99/100$ gives us an $O(\sqrt{r})$ average-distortion embedding into $\ell_1$, and hence an $O(\sqrt{r})$ approximation to $\Phi(G)$ in this setting. Thus, the best of the two cuts will be a $ O(\sqrt{r}+ 1/\sqrt{\epsilon})$ approximation to $\Phi(G)$.
\end{proof}

\end{document}